\definecolor{vlightgray}{gray}{0.85}
\newcommand{\coverage}{\ensuremath{\textsc{Coverage}}}
\newcommand{\fairnessconstraints}{{\ensuremath{\mathcal F}}}
\newcommand{\coverageconstraints}{{\ensuremath{\mathcal C}}}
\newcommand{\mutable}{{\ensuremath{\mathbf M}}}
\newcommand{\immutable}{{\ensuremath{\mathbf I}}}
\newcommand{\patterngroup}{\ensuremath{\pattern_{{\tt grp}}}}
\newcommand{\patterninterv}{\ensuremath{\pattern_{{\tt int}}}}
\Crefname{algocf}{Algorithm}{Algorithms}
\newcommand{\paratitle}[1]{
\noindent{\bf #1.}}
\newcommand{\xmark}{\ding{55}}%
\let\oldnl\nl
\newcommand{\nonl}{\renewcommand{\nl}{\let\nl\oldnl}}
\def\HiLi{\leavevmode\rlap{\hbox to \hsize{\color{red!20}\leaders\hrule height .8\baselineskip depth .5ex\hfill}}}
\newcommand\independent{\protect\mathpalette{\protect\independenT}{\perp}}
\def\independenT#1#2{\mathrel{\rlap{$#1#2$}\mkern2mu{#1#2}}}
\newtheorem{definition}{Definition}[section]
\newtheorem{example}{Example}[section]
\newcommand{\sr}[1]{\textcolor{purple}{\bf (SR:)~[#1]}{\typeout{#1}}}
\newcommand{\ignore}[1]{}
\newcommand{\red}[1]{{\color{red} {#1}}}
\newcommand{\newtextold}[1]{{#1}}
\newcommand{\cut}[1]{}
\newcommand{\probName}{Prescription Ruleset Selection}
\newcommand{\sysName}{\textsc{FairCap}\xspace}
\newcommand{\attrset}{\ensuremath{\mathbb{A}}}
\newcommand{\dom}{{\tt dom}}
\newcommand{\doop}{{\tt do}}
\newcommand{\db}{\ensuremath{D}}
\newcommand{\exo}{\ensuremath{\mathbf{U}}}
\newcommand{\pattern}{\ensuremath{\mathcal{P}}}
\newcommand{\model}{\ensuremath{\mathcal{G}_\db}}
\newcommand{\edvar}{\ensuremath{\mathbf{V}}}
\definecolor{moonstoneblue}{rgb}{0.45, 0.66, 0.76}
\definecolor{oldlace}{rgb}{0.99, 0.96, 0.9}
\definecolor{mintcream}{rgb}{0.96, 1.0, 0.98}
\definecolor{mintgreen}{rgb}{0.0, 0.5, 0.0}
\definecolor{mistyrose}{rgb}{1.0, 0.89, 0.88}
\definecolor{palegold}{rgb}{0.9, 0.75, 0.54}
\definecolor{palechestnut}{rgb}{0.87, 0.68, 0.69}
\newcommand{\reva}[1]{{\leavevmode\color{black}{#1}}}
\newcommand{\revb}[1]{{\leavevmode\color{black}{#1}}}
\newcommand{\revc}[1]{{\leavevmode\color{black}{#1}}}
\newcommand{\common}[1]{{\leavevmode\color{black}{#1}}}
\def\HiLiG{\leavevmode\rlap{\hbox to \hsize{\color{green!30}\leaders\hrule height .8\baselineskip depth .5ex\hfill}}}
\def\HiLiY{\leavevmode\rlap{\hbox to \hsize{\color{yellow!50}\leaders\hrule height .8\baselineskip depth .5ex\hfill}}}
\definecolor{light-gray}{gray}{0.95}
\definecolor{darkgreen}{RGB}{0, 200, 0}
\definecolor{experimentblue}{RGB}{127, 159, 186}
\definecolor{experimentgreen}{RGB}{127, 186, 130}
\definecolor{experimentred}{RGB}{186, 138, 127}
\definecolor{experimentgray}{RGB}{94,94,94}
\newtcolorbox[auto counter,number within=subsection]{ruleset}[1]{
  breakable, boxrule=0pt, boxsep=3pt,left=0pt,right=0pt,top=0pt,bottom=0pt,  colback=white, colframe=black,colbacktitle=experimentgray, title=#1:,
}
\newtcolorbox[auto counter,number within=subsection]{summary}[1]{
  breakable, boxrule=0pt, boxsep=3pt,left=0pt,right=0pt,top=0pt,bottom=0pt, colbacktitle=experimentgreen, title=#1:,
}
\def\independenT#1#2{\mathrel{\rlap{$#1#2$}\mkern2mu{#1#2}}}
\begin{document}

\title{Fair and Actionable Causal Prescription Ruleset}

\author{Benton Li}
\affiliation{%
  \institution{Cornell University}
  \country{USA}
}
\email{cl2597@cornell.edu}

\author{Nativ Levy}
\affiliation{%
  \institution{Technion}
   \country{Israel}
}
\email{nativlevymail@gmail.com}

\author{Brit Youngmann}
\affiliation{%
  \institution{Technion}
   \country{Israel}
}
\email{brity@technion.ac.il}

\author{Sainyam Galhotra}
\affiliation{%
  \institution{Cornell University}
   \country{USA}
}
\email{sg@cs.cornell.edu}

\author{Sudeepa Roy}
\affiliation{%
  \institution{Duke University}
   \country{USA}
}
\email{sudeepa@cs.duke.edu}

\renewcommand{\shortauthors}{Li et al.}

\begin{abstract}

Prescriptions, or actionable recommendations, are commonly generated across various fields to influence key outcomes such as improving public health, enhancing economic policies, or increasing business efficiency. While traditional association-based methods may identify correlations, they often fail to reveal the underlying causal factors needed for informed decision-making. 
On the other hand, in decision making for tasks with significant societal or economic impact, it is crucial to provide recommendations that are justifiable and equitable in terms of the outcome for both the protected and non-protected groups. 
Motivated by these two goals,  this paper introduces a fairness-aware framework leveraging causal reasoning for generating a set of actionable prescription rules (ruleset) toward betterment of an outcome while preventing exacerbating inequalities for protected groups.
By considering group and individual fairness metrics from the literature, we ensure that both protected and non-protected groups benefit from these recommendations, providing a balanced and equitable approach to decision-making. We employ efficient optimizations to explore the vast and complex search space considering both fairness and coverage of the ruleset. Empirical evaluation and case study on real-world datasets 
demonstrates the utility 
of our framework for different use cases.
\end{abstract}

\maketitle




\section{Introduction}
\revc{Prescriptions, or actionable recommendations, are commonly generated across various fields to influence key outcomes such as improving product satisfaction, enhancing economic policies, or increasing business efficiency. 
Policymakers in government, decision-makers in businesses, and data scientists in various fields, often rely on data-driven approaches to identify 
potential actions to influence an outcome of interest, such as increasing income levels or loan approval rates}.
%
While association or prediction-based methods are extensively used in practice to draw useful insights from data, they typically identify correlations among variables and may fail to reveal the underlying causal factors, i.e., which actions may result in an improved outcome, needed for informed decision-making. 

\emph{Causal analysis} or {\em causal inference}, therefore, is considered one of the most important requirements to generate prescriptions that are {\em actionable} and aligned with human reasoning~\cite{imbens2024causal}. Causal inference, and in particular {\em observational studies} for causal inference on collected data (when controlled trials are impossible due to cost or ethical reasons), have been extensively studied in the statistics and artificial intelligence (AI) literature for several decades \cite{rubin2005causal, pearl2009causal}. Motivated by this foundational work on causal inference, the notion of causality has also influenced the field of database research. The causal models from AI have been extended to relational databases \cite{salimi2020causal},  and causality has been incorporated into various data management tasks such as finding responsibilities of inputs toward query answers ~\cite{meliou2010causality, meliou2009so, meliou2014causality}, explanations for query answers \cite{roy2014formal, DBLP:journals/pacmmod/YoungmannCGR24}, data discovery~\cite{galhotra2023metam,youngmann2023causal}, data cleaning~\cite{pirhadi2024otclean,salimi2019interventional}, hypothetical reasoning \cite{galhotra2022causal}, and large system diagnostics~\cite{markakis2024sawmill,causalsim,sage, gudmundsdottir2017demonstration}.

\revc{If-then rules are generally considered interpretable by humans~\cite{lakkaraju2016interpretable,guidotti2018local,van2021evaluating,pradhan2022interpretable,chen2018optimization}.
We give a concrete example of the difference between association and causation in generating prescriptions or recommended actions in the form of if-then rules below}:
\begin{example}	%
\label{example:ex1} {\bf Importance of causal prescriptions:}
Consider the Stack Overflow (SO) annual developer survey
\cite{stackoverflowreport}, where respondents from around the world answer
questions about their jobs and demographics. A sample of the dataset \reva{with a subset of the
attributes (there are 20 attributes)} is presented in \cref{tab:data}.
Alice, a researcher in the United Nations (UN) finance department, is interested in discovering ways to increase the salaries of high-tech employees worldwide. She is looking for a set of actionable recommendations 
to raise the overall average salary.
Using association-based approaches~\cite{chen2018optimization,lakkaraju2016interpretable}, she may discover that individuals residing in the US who identify as straight or heterosexual tend to earn higher salaries (see \cref{exp:quality} for full details). However, this observation merely indicates a correlation: people living in the US, for example, generally earn more than those outside the country. Their comparatively higher salaries are primarily attributable to the country's economy and are unrelated to their sexual orientation. Thus, this observation cannot be used as a prescription rule to increase salary. 
Our causal analysis, on the other hand, reveals that individuals aged 25-34 with dependents would benefit from working as front-end developers.
This results in a \$44,009 annual salary increase on average. \reva{Another observation is that students should pursue an
undergraduate major in CS. 
This can boost their salary by \$22,174 per year} (see details in \cref{sec:casestudy}).
\end{example}





\cut{
In this work, {we study the problem of generating causal insights (referred to as \emph{prescription rules}), which serve as actionable recommendations} to improve an outcome of interest.
Recent works have introduced causality to the field of database research~\cite{meliou2010causality,  meliou2014causality,salimi2020causal,10.14778/3554821.3554902}. It has been incorporated into various tasks including data discovery~\cite{galhotra2023metam,youngmann2023causal}, data cleaning~\cite{pirhadi2024otclean,salimi2019interventional}, and large system diagnostics~\cite{markakis2024sawmill,causalsim,sage, gudmundsdottir2017demonstration}. 
We propose using causal inference to generate prescription rules that are both actionable and justifiable.
}

While generating prescriptions based on causal inference may help in robust decision-making, causal prescriptions that solely consider the betterment of an outcome (like salary) are not enough in practice. 
It is well-known that decision-making in many high-stake applications (like hiring policy, or policy for approving loans by banks) may lead to disparate societal or economic impact on different sub-populations. 
As a shocking example from a recent work called 
CauSumX~\cite{DBLP:journals/pacmmod/YoungmannCGR24} that generates a set of causal explanations for an aggregated view, the explanations generated 
suggest that male individuals do a Bachelor's degree to increase their salary while 
being an unmarried woman 
has the most adverse effect on salary
(borrowed directly 
from Fig.~19 in~\cite{youngmann2024summarizedcausalexplanationsaggregate}). 
We explored this further in the context of generating prescriptions and observed that prescriptions that are not fairness-aware can generate unfair outcomes to some subpopulations which we refer to as the {\em protected group}. Examples include women, Black, Latino, or Native Americans, individuals with a disability, countries with a weaker economy, or other protected groups specific to an application. 

\begin{table*}[]
\footnotesize
    \centering
    	\caption{\textnormal{A subset of the Stack Overflow dataset.}}
         \label{tab:data}
  			\begin{tabular}[b]{|l|l|l|c|l|l|c|l|c|}
  			
				\hline

				\textbf{ID}
    
    
    &\textbf{Gender} &\textbf{Ethnicity}&
				\textbf{Age} &\textbf{Role} &
				\textbf{Education} &\textbf{Country}&\textbf{Undergrad Major}&\textbf{Salary}
				\\ \hline

				1 &Male&White&26&Data Scientist & PhD& US&Computer Science&180k\\
    		2 &Non-binary&White&32&QA developer & Bachelor's degree& US&Mechanical Eng.&83k\\

 3 &Male&South Asian&29&C-suite executive  & Bachelor's degree & India&Computer Science&24k\\


  4 &Female&East Asian&21&Back-end developer & Bachelor's degree & China&Computer Science&19k\\

    \hline
			\end{tabular}
            \vspace{-5mm}
\end{table*}

\begin{example}	%
\label{example:ex2}
{\bf Importance of fair prescriptions:}
Continuing Example~\ref{example:ex1}, while those causal prescription rules are highly beneficial for the overall population, they are considerably less effective for individuals residing in countries with a low GDP (indicating a weaker economy). For this group, the average expected increase in salary is only approximately \$13,000 per year (in contrast to \$44,009 for the entire group). 
Consequently, implementing these rules would exacerbate the disparity between those living in countries with strong economies and those in countries with weaker economies.
\end{example}


The example above shows that focusing solely on maximizing utility (\revc{i.e., increasing income}) can result in a scenario where only some of the population receive significant improvement, while others experience no benefit (\revc{only a small benefit for individuals from countries with weaker economies in our example}). Additionally, even if a large portion of the population receives recommendations, a protected subpopulation might not share the benefits and, worse, their situation could deteriorate, exacerbating inequalities.

Examples~\ref{example:ex1} and \ref{example:ex2} show that it is crucial to provide recommendations that are (1) {\em causal} for the outcome (beyond associations),  and (2) also {\em fair or equitable} in terms of the outcome for both the protected and non-protected groups. While recent work in database research
has focused on deriving {\em causal explanations} for individual data points, aggregated view, or entire datasets~\cite{salimi2018bias,youngmann2022explaining,ma2023xinsight, DBLP:journals/pacmmod/YoungmannCGR24}, and in particular \cite{DBLP:journals/pacmmod/YoungmannCGR24} has considered generating a set of causal explanations for an aggregated view that resemble a ruleset, 
the absence of fairness considerations in generating these causal explanations can lead to unfair outcomes for the protected group.


\medskip
\noindent
\textbf{Our contributions.~} 
Motivated by the dual goals of generating causal and fair prescriptions for the betterment of an outcome, we introduce a {\em fairness-aware framework leveraging causal reasoning for generating a set of actionable prescription rules (ruleset)} called \sysName\ (\underline{Fair} \underline{CA}usal \underline{P}rescription).
Following research on fairness in data management~\cite{stoyanovich2020responsible,galhotra2022causal}, we assume the existence of a \emph{protected subpopulation}, defined by an attribute such as gender or race for people, or GDP of a country. Motivated by the causal explanation rules for an aggregated view \cite{DBLP:journals/pacmmod/YoungmannCGR24}, each prescription rule in our ruleset applies to a sub-population defined by a {\em grouping attribute}, and prescribes a {\em treatment or intervention} to improve the {\em outcome} for this sub-population. Fairness constraints ensure that the expected utility of the protected population is {\em comparable} to the utility of the unprotected individuals. We borrow the notions of \emph{group and individual fairness} from the fairness literature but tailor them for prescription rules. In addition to the fairness constraints, our coverage constraints ensure that a substantial fraction of the population and protected subpopulation receives at least one recommendation. 

\begin{example}
\label{ex:intro_example_3}
Continuing Examples~\ref{example:ex1} and \ref{example:ex2}, Alice uses our proposed system, called \sysName, to impose fairness and coverage constraints to discover useful and equitable recommendations for increasing salaries worldwide. In particular,
Alice chooses to implement a coverage constraint to ensure that the selected rules apply to a significant portion of people worldwide, including a sufficiently large number of individuals from countries with low GDP (the protected group). She also imposes a fairness constraint to ensure that the expected gains for both protected and non-protected groups are comparable.
\reva{She discovers, for example, that for individuals with 6-8 years of coding experience (a subpopulation comprising 21\% of the entire dataset and 25\% of the protected group), pursuing a bachelor’s degree in computer science will increase the expected salary by $\$14.9k$ for protected and by $\$17.8k$ for non-protected}. (See \cref{sec:casestudy} for more details.) This prescription rule applies to a large portion of the population and ensures fairness by providing a similar expected gain for both protected and non-protected groups, and the allowed difference of outcomes between these two populations may be adjusted by choosing appropriate thresholds in the fairness definitions. 
\end{example}

\noindent
Our main contributions are as follows. \\
{\bf (1)} We {\bf develop a framework that generates a set of prescription rules to enhance an outcome of interest (Section~\ref{sec:problem})}. A prescription rule consists of a \emph{grouping pattern} and an \emph{intervention pattern}, representing the target subpopulation and the actionable recommendation for that group, respectively. The strength of the {\em conditional causal effect} (Section~\ref{sec:background-causal}) of this intervention on the subgroup is used to measure the expected utility of a rule. Our objective is to identify the smallest set of rules that maximizes overall expected utility. We refer to this problem as the {\em \probName} problem.
We adopt several notions of fairness (individual vs. group, statistical parity vs. bounded group loss) from the literature to define the {\bf fairness constraints} for our problem. In addition, {\bf coverage constraints} (for individual rules or for a group) ensure that the solution for the \probName\ problem is applied to a sufficient number of individuals and to minimize inequalities. We show NP-hardness for different variants of the problems and properties (matroid) useful in our algorithms. 
\smallskip
    \par
    \noindent
{\bf (2)} We {\bf develop a general three-step algorithm named \sysName to solve the optimization problem of selecting a fair prescription ruleset (Section~\ref{sec:algo})}. The first step involves mining frequent grouping patterns using the Apriori algorithm~\cite{agrawal1994fast}. In the second step, we employ a lattice-based algorithm to find high utility and fair intervention patterns for grouping patterns identified in the previous step. Finally, the third step applies a greedy approach to determine a solution. \sysName\ can be easily adapted to accommodate all variants of the \probName\ problem.

\smallskip
\par
\noindent
{\bf (3) We provide a detailed  case study  (Section~\ref{sec:casestudy}) and experimental analysis (Section~\ref{sec:experiments}) to evaluate our framework and algorithms.}
The case study shows the qualitative difference of different variants of our problem for different choices of the fairness and coverage constraints. The experiments include two datasets, three baselines, and 18 variations of our problem with different constraints. Our evaluations suggest that fairness may come at the cost of expected
utility for everyone. However, without fairness constraints, we often observe a significant disparity between the protected and non-protected. We also observe that
achieving individual fairness is harder than group fairness,
as most high-utility or high-coverage rules are unfair. Lastly, we show that \sysName\ can generate  prescription rules over large datasets in a reasonable time. 


We discuss related work in \cref{sec:related}, review background on causal inference in \Cref{sec:background-causal}, 
and discuss the limitations of our framework and future work in \cref{sec:conc}.

\section{Related work}
\label{sec:related}

Table \ref{tab:relatedwork} outlines the key distinctions between \sysName\ and prior work. The columns in bold emphasize our key contributions: we generate \textbf{causality-based} prescription rules aimed at {improving outcomes} for the \textbf{entire datasets}, while also \textbf{considering fairness}. In contrast, other approaches either produce non-causal rules (as shown in the Causal column), target only subsets of the data (as indicated in the Entire dataset column), or disregard fairness considerations (as highlighted in the Fairness column).

\begin{table}[t]
\small
\centering
\caption{Positioning of our framework w.r.t. previous work.}
\label{tab:relatedwork}
\footnotesize{
\begin{tabular}{|p{25mm}|c|c|c|c|c|c|c|}
    \hline
 
\multicolumn{2}{|c|}{{Related Work}}  & {Causal} &   {\begin{tabular}{@{}c@{}}Fairness\end{tabular}} & {\begin{tabular}{@{}c@{}} Entire\\dataset\end{tabular}} \\
    \hline
\multirow{4}{*}{\begin{tabular}{@{}l@{}}Aggregate Query\\ Result Explanation\end{tabular}}&\cite{ma2023xinsight}&\textbf{\checkmark}&\xmark&\xmark\\
&\cite{salimi2018bias, youngmann2022explaining,DBLP:journals/pacmmod/YoungmannCGR24,roy2014formal,meliou2009so}&\textbf{\checkmark}&\xmark&\checkmark\\

&\cite{li2021putting,miao2019going}&\xmark&\xmark&\xmark\\

&\cite{wu2013scorpion}&\xmark&\xmark&\checkmark\\

\hline
\multirow{2}{*}{\begin{tabular}{@{}l@{}}Interpretable\\ Prediction Models\end{tabular}} 
& \cite{chen2018optimization,lakkaraju2016interpretable}&\xmark&\xmark&{\bf \checkmark}\\
&&&&\\
\hline
\multirow{2}{*}{\begin{tabular}{@{}l@{}}Multi dimensional\\ data aggregation\end{tabular}} &\cite{pastor2021looking,surve2024example,pastor2023hierarchical}&\xmark&\checkmark&\checkmark\\

&\cite{el2014interpretable,kim2020summarizing}&\xmark&\xmark&\checkmark\\

    \hline
    \multicolumn{2}{|c|}{\sysName}
\textbf{}&\textbf{\checkmark}&\textbf{\checkmark}&\textbf{\checkmark}\\
\hline
\end{tabular}
}
\end{table}

\paragraph*{Rule mining}
Association rule mining has been extensively studied \cite{kumbhare2014overview} and is used to identify relationships between items that frequently co-occurring in datasets. These techniques are applied across various fields, such as data analysis and outcome improvement. Notable algorithms include STEM~\cite{girotra2013comparative}, FP-Growth~\cite{kaur2013performance}, AIS~\cite{zhao2003association}, and the Apriori algorithm~\cite{agrawal1994fast}. We leverage the Apriori algorithm to identify sufficiently large subpopulations for which we will generate causal interventions.
Rule-based interpretable prediction models~\cite{yang2017scalable,chen2018optimization,lakkaraju2016interpretable} often leverage association rule mining to generate predictive rules~\cite{lakkaraju2016interpretable,lawless2023interpretable}, with the goal of balancing high predictive accuracy with interpretability~\cite{sagi2021approximating,schielzeth2010simple,lou2013accurate,kim2014bayesian,lawless2023interpretable}. 

Recent work has proposed generating rules based on causal relationships.
In \cite{plecko2022causal}, a framework was introduced to address biases in the data for fair causal analysis. \reva{Other studies have explored the integration of fairness criteria into causal analysis \cite{plecko2023causal,zhang2022adaptive}.}  \cite{sun2021treatment} proposed a method to optimally allocate treatments with uncertain costs that vary based on confounders. Related research has also focused on estimating heterogeneous treatment effects~\cite{wager2018estimation, xie2012estimating,wang2022causal}. However, these approaches assume both treatment and outcome variables are known. In contrast, we assume only the outcome variable is provided and aim to identify treatments that influence the outcome for different subpopulations, potentially yielding different treatments for each group. Our approach ensures the rules apply broadly while maintaining fairness for minority groups.

We adapt the method proposed in \cite{DBLP:journals/pacmmod/YoungmannCGR24} called CauSumX. CauSumX is designed to identify the treatment with the highest causal effect on the outcome for a given subpopulation, generating causal explanations for aggregate queries. A main difference is that CauSumX does not consider fairness. We empirically show that using CauSumX to generate prescription rules can lead to significant disparities between protected and non-protected populations (See \cref{exp:quality}).
Another main difference is that CauSumX considers the \emph{aggregate view} to generate explanations, whereas we consider the entire data, thus, the search space is significantly larger.
Our primary contribution lies in introducing fairness constraints on the generated rules, making the necessary adjustments to the algorithm to scale, and conducting an experimental study to demonstrate the importance of fairness constraints.




\paragraph*{Fairness in data management}
\reva{Algorithmic fairness, especially in the context of predictions by ML algorithms for high-stake decision making, has been a prominent topic in ML and AI (e.g., \cite{agarwal2019fair,mehrabi2021survey,pessach2022review,caton2024fairness,roh2020fairbatch,DBLP:conf/icml/Roh0WS23,jain2024algorithmic,smith2022recsys,ekstrand2019fairness,gao2020counteracting}).} Popular notions of fairness include group and individual fairness~\cite{binns2020apparent,garcia2021maxmin,DBLP:journals/pacmmod/ZhangTPCW23,somerstep2024algorithmic}. Group fairness (measured as statistical parity or equalized odds) ensures that the decision-making process is fair to the protected group but may be unfair towards any specific individual. In contrast, individual notions of fairness enforce that the decisions are fair towards every individual. We refer the reader to Section~\ref{subsec:fairness_constraint} for more details. 
In recent years, fairness has emerged as a key consideration in
data management research~\cite{stoyanovich2020responsible,galhotra2022causal,DBLP:journals/pvldb/TaeZPRW24,DBLP:journals/pacmmod/ZhangTPCW23,jain2024algorithmic}. This includes ensuring fairness during data acquisition~\cite{asudeh2019assessing,nargesian2021tailoring,Nargesian2022}, improving data cleaning processes to promote fairness~\cite{guha2024automated,salimi2019interventional}, and achieving fairness in ranking and in database queries~\cite{zehlike2022fairness,zehlike2022fairness2,li2023query}. \reva{Techniques to ensure fairness of allocated resources~\cite{majumdar2024carma, ehyaei2023robustness} can also be extended to study scenarios where the available resources may be diverse and constrained.
One of our contributions is the introduction of novel definitions of group and individual fairness for causal analysis. We defer the extension of our work to other definitions of fairness to future work.}

\paragraph*{Causal inference in data management} \newtextold{Causality, used as a generic term of cause-effect analysis, has been used in different contexts 
in data management research
}~\cite{meliou2009so, meliou2010complexity,salimi2020causal,10.14778/3554821.3554902}. 
This includes data discovery~\cite{galhotra2023metam,youngmann2023causal,gong2024nexus,santos2021correlation,huang2023fast}, data cleaning~\cite{pirhadi2024otclean,salimi2019interventional}, query result explanation~\cite{salimi2018bias,DBLP:journals/pacmmod/YoungmannCGR24,youngmann2022explaining,ma2023xinsight, roy2014formal}, hypothetical reasoning~\cite{galhotra2022hyper}, and  system diagnostics~\cite{markakis2024sawmill,causalsim,gudmundsdottir2017demonstration}. \newtextold{We use the interventional notion of causal inference on observational data from AI and Statistics \cite{pearl2009causal, rubin2005causal} to define our prescription rules (more in Sections~\ref{sec:background-causal} and \ref{subsec:utility}). } 
to design interventions that improve an outcome of interest.


\paragraph*{Aggregate query result explanation}
A substantial line of research has focused on using provenance to explain aggregate SQL query results~\cite{bidoit2014query,chapman2009not,meliou2010complexity,meliou2009so,DeutchFG20,lee2020approximate,ten2015high,li2021putting}. Other explanation methods include (non-causal) interventions \cite{wu2013scorpion,roy2014formal,roy2015explaining,tao2022dpxplain,DBLP:journals/pvldb/DeutchGMMS22}, and counterbalancing patterns \cite{miao2019going}. Recent work \cite{salimi2018bias,youngmann2022explaining,DBLP:journals/pacmmod/YoungmannCGR24,ma2023xinsight} has proposed methods that use causal inference to explain query results.

\paragraph*{Multi dimensional data aggregation}
Previous work on multidimensional data aggregation developed methods that extend the traditional drill-down and roll-up operators to find the most interesting data parts for exploration~\cite{agarwal1999cube,sathe2001intelligent,joglekar2017interactive,DBLP:journals/pvldb/YoungmannAP22,10.1145/3448016.3457259}.
Other works have focused on assessing the similarity between data cubes~\cite{baikousi2011similarity}), or discovering intriguing data visualizations~\cite{vartak2015seedb,zhang2021viewseeker}. 

Part of our goal is to identify subpopulations for which we can generate recommendations. We utilize existing solutions whenever applicable (e.g., we use the Apriori algorithm~\cite{agrawal1994fast} to find sufficiently large subpopulations) and develop novel methods when necessary.













\section{Background on Causal Inference}
\label{sec:background-causal}

 \newtextold{In this section, we 
 review the basic concepts and key assumptions for inferring the effects of an intervention on the outcome on collected datasets without performing randomized controlled experiments. 
We use {\em Pearl's graphical causal model} for {\em observational causal analysis} \cite{pearl2009causal} to define these concepts.}

\par
\paratitle{Causal Inference and Causal DAGs} The primary goal of causal inference is to model causal dependencies between attributes and evaluate how changing one variable (referred to as intervention) would affect the other.
Pearl's Probabilistic Graphical Causal Model \cite{pearl2009causal} can be written as a tuple $(\exo, \edvar, Pr_{\exo}, \psi)$, where $\exo$ is a set of {\em exogenous} variables, $\Pr_{\exo}$ is the joint distribution of \exo, and $\edvar$ is a set of observed {\em endogenous variables}.
Here $\psi$ is a set of structural equations that encode dependencies among variables. The equation for $A \in \edvar$ takes the following form:
$$\psi_{A}: 
\dom(Pa_{\exo}(A)) {\times} \dom(Pa_{\edvar}(A)) \to \dom(A)$$
Here $Pa_{\exo}(A) {\subseteq} {\exo}$ and $Pa_{\edvar}(A) {\subseteq} \edvar \setminus \{A\}$ respectively denote the exogenous and endogenous parents of $A$. A causal relational model is associated with a directed acyclic graph ({\em causal DAG}) $G$, whose nodes are the endogenous variables $\edvar$ and there is a directed edge from $X$ to $O$ if  $X {\in} Pa_{\edvar}(O)$. The causal DAG obfuscates exogenous variables as they are unobserved. 
The probability distribution $\Pr_{\exo}$ on exogenous variables $\exo$ induces a probability distribution  
on the endogenous variables $\edvar$ by the structural equations $\psi$.  A causal DAG can be constructed by a domain expert as in the above example, or using existing {\em causal discovery} algorithms~\cite{glymour2019review}.

\begin{figure}
    \centering
    \small
    \begin{tikzpicture}[node distance=0.6cm and 1cm, every node/.style={minimum size=0.5cm}]
        \tikzset{vertex/.style = {draw, circle, align=center}}

        \node[vertex] (Ethnicity) {\bf\scriptsize{{Ethnicity}}};
        \node[vertex, right=0.3cm of Ethnicity] (Gender) {\bf{\scriptsize{Gender}}};
        \node[vertex, right=0.3cm of Gender] (Age) {\bf{\scriptsize{Age}}};
        \node[vertex, below=0.3cm of Gender] (Role) {\bf{\scriptsize{Role}}};
        \node[vertex, right=0.3cm of Role] (Education) {\bf{\small{\scriptsize{Education}}}};
        \node[vertex, below=0.3cm of Role] (Salary) {\bf{\scriptsize{Salary}}};

        \draw[->] (Ethnicity) -- (Salary);
        \draw[->] (Gender) -- (Role);
        \draw[->] (Age) -- (Role);
         \draw[->] (Education) -- (Role);
           \draw[->] (Education) -- (Salary);
             \draw[->] (Ethnicity) -- (Education);
                \draw[->] (Ethnicity) -- (Role);
             \draw[->] (Gender) -- (Education);
               \draw[->] (Age) -- (Education);
                 \draw[->] (Role) -- (Salary);
        \draw[->] (Gender) to[bend right] (Salary);
        \draw[->] (Age) -- (Salary);
    \end{tikzpicture}
    \caption{Partial causal DAG for the Stack Overflow dataset.}
    \label{fig:causal_DAG}
\end{figure}
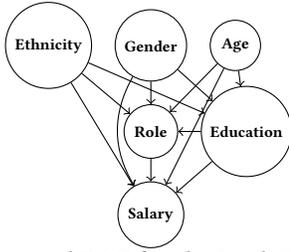

 \begin{example}
Figure \ref{fig:causal_DAG} depicts a partial causal DAG for the SO dataset over the attributes in Table \ref{tab:data} as endogenous variables (we use a larger causal DAG with all 20 attributes in our experiments). 
  Given this causal DAG, we can observe that the role that a coder has in their company depends on their education, age gender and ethnicity.
\end{example}
\par

\par
\paratitle{Intervention} In Pearl's model, a treatment $T = t$ (on one or more variables) is considered as an {\em intervention} to a causal DAG by mechanically changing the DAG such that the values of node(s) of $T$ in $G$ are set to the value(s) in $t$, which is denoted by $\doop(T = t)$. Following this operation, the probability distribution of the nodes in the graph changes as the treatment nodes no longer depend on the values of their parents. Pearl's model gives an approach to estimate the new probability distribution by identifying the confounding factors $Z$ described earlier using conditions such as {\em d-separation} and {\em backdoor criteria} \cite{pearl2009causal}, which we do not discuss in this paper.

\par
\paratitle{Average Treatment Effect} The effects of an intervention are often measured by evaluating
{\em Conditional Average treatment effect (CATE)}, 
measuring the effect of an intervention on a subset of records~\cite{rubin1971use,holland1986statistics} by calculating the difference in average outcomes between the group that receives the treatment and the group that does not (called the {\em control} group), providing an estimate of how the intervention by $T$ influences an outcome $O$ for a given subpopulation. 
Given a subset of the records defined by (a vector of) attributes $B$ and their values $b$, 
we can compute $CATE(T,O \mid B = b)$ as:
{
\begin{eqnarray}    
    \mathbb{E}[O \mid \doop(T=1), B = b] -  
    \mathbb{E}[O \mid \doop(T=0), B = b]\label{eq:cate}
\end{eqnarray}
}
Setting $B=\phi$ is equivalent to the ATE estimate.
The above definitions assumes that the treatment assigned to one unit does not affect the outcome of another unit (called the {Stable Unit Treatment Value Assumption (SUTVA)) \cite{rubin2005causal}}\footnote{This assumption does not hold for causal inference on multiple tables and even on a single table where tuples depend on each other.}.

The ideal way of estimating the ATE and CATE is through {\em randomized controlled experiments}, 
where the population is randomly divided into two groups (treated and control, for binary treatments): 
denoted by 
$\doop(T = 1)$ 
and $\doop(T = 0)$ resp.)~\cite{pearl2009causal}.
However, randomized experiments cannot always be performed due to ethical or feasibility issues. In these scenarios, observational data is used to estimate the treatment effect, which requires the following additional assumptions. 
\newtextold{
The first assumption is called {\em unconfoundedness} or {\em strong ignorability}  \cite{rosenbaum1983central} says that the independence of outcome $O$ and treatment $T$ conditioning on a set of confounder variables  (covariates) $Z$, i.e.,
 $    O \independent T | Z {=} z$.
The second assumption called {\em overlap or positivity} says that there is a chance of observing individuals in both the treatment and control groups for every combination of covariate values, i.e., 
   $ 0 < Pr(T {=} 1 ~~|~~Z {=} z)< 1 $.
}
The unconfoundedness assumption requires that the treatment $T$ and the outcome $O$ be independent when conditioned on a set of variables $Z$. In SO, assuming that only $Z$ =\{\verb|Gender|, \verb|Age|, \verb|Country|\} affects $T = $ \verb|Education|, if we condition on a fixed set of values of $Z$, i.e., consider people of a given gender, from a given country, and at a given age, then $T = $ \verb|Education| and $O = $ \verb|Salary| are independent. For such confounding factors $Z$,  Eq. (\ref{eq:cate}) reduces to the following form 
(positivity 
gives the feasibility of the expectation difference): 
 \vspace{-1mm}
{\small
\begin{flalign}    
 & CATE(T,O {\mid} B {=} b) {=} \nonumber
    \mathbb{E}_Z \left[\mathbb{E}[O {\mid} T{=}1, B {=} b, Z {=} z] {-}  
    \mathbb{E}[O {\mid} T{=}0, B {=} b, Z {=} z]\right]\label{eq:conf-cate}
\end{flalign}
}
This equation contains conditional probabilities and not $\doop(T = b)$, which can be estimated from an observed data. 
Pearl's model gives a systematic way to find such a $Z$ when a causal DAG is available.

\section{Problem Formulation}
\label{sec:problem}

We consider a single-relation database over a schema $\attrset$. The schema is a vector of attribute names, i.e., $\attrset {=} (A_1, \ldots, A_s)$, where each $A_i$ is associated with a domain $\dom(A_i)$, which can be categorical or continuous. 
A database instance \db, populates the schema with a set of tuples $t {=} (a_1, \ldots, a_s)$ where $a_i {\in} \dom(A_i)$. We use $t[A_i]$ to denote the value of attribute $A_i$ of tuple $t$.  

\newtextold{Our high-level goal is to return a set of {\em prescription rules} (ruleset) with certain desired properties including fairness. In this section, first we define patterns on attributes, protected groups, prescription rules, and discuss the desired properties, finally defining our optimization problem in \Cref{subsec:problem}.}

\subsection{Pattern and Protected Group}
\label{subsec:rules}
To define the notion of \emph{prescription rules}, we build upon the commonly used concept of \emph{patterns}~\cite{roy2014formal,wu2013scorpion,el2014interpretable,lin2021detecting}.
\revc{Patterns are commonly used  in query result explanation research~\cite{roy2014formal,wu2013scorpion,el2014interpretable,lin2021detecting}. Generally, patterns are equivalent to the WHERE clause in SQL queries. However, in this work, we focus on a specific type of pattern: a conjunction of predicates, as has been done in prior query result explanation research~\cite{roy2014formal,wu2013scorpion,el2014interpretable,lin2021detecting,DBLP:journals/pacmmod/YoungmannCGR24}.}


\begin{definition}[Pattern]\label{def:pattern}
Given a database instance \db\ with schema \attrset,
a predicate is an expression of the form $\varphi {=} A_i ~{\tt op }~ a_i$, 
where $A_i {\in} \attrset$, 
$a_i {\in} \dom(A_i)$, and ${\tt op} {\in} \{=,\neq, <, >, \leq, \geq\}$.  
A {\em pattern} is a conjunction of predicates $\pattern {=} \varphi_1 \land \ldots \land \varphi_k$. 
\end{definition}

\begin{example}
  An example pattern over the Stack Overflow dataset (\cref{tab:data}) is $\pattern =$\verb|{Role = Designer| $\land$ \verb|Country = US}|. It defines a subset of the dataset comprised of designers from the US. 
\end{example}
 
Next we define {\em coverage} of a pattern $\pattern$ defined by the number of tuples from $\db$ that it captures.
\begin{definition}[Coverage of a pattern]\label{def:coverage}
Given a database instance \db\, a pattern $\pattern$, and a tuple $t {\in} \db$, $\pattern$ is said to {\em cover} $t$ if $t$ satisfies the predicates in $\pattern$. 
The subset of tuples in $\db$ covered by $\pattern$ 
is denoted by 
$\coverage(\pattern)$.
\end{definition}

As is common in fairness research~\cite{stoyanovich2020responsible,zehlike2017fa,caton2024fairness}, we assume the presence of a protected group, defined by the pattern $\pattern_p$. The remainder of the population (i.e., $\db {\setminus} \pattern_p(\db)$) is referred to as the non-protected group.
A protected group in the Stack Overflow dataset may be defined based on sensitive attributes such as age or ethnicity. For instance, it could be defined as $\pattern_p = \{$ \verb|Ethnicity|$\neq$ \verb|White|$\}$ to refer to non-white individuals.

\subsection{Prescription Rules}
A \emph{prescription rule} outlines an \emph{intervention} \newtextold{({\em treatment})} designed to improve a target variable within a particular subpopulation. For example, a prescription rule might recommend that people under 25 pursue a Ph.D. to increase their salary. 
Before defining perception rules, we first discuss \newtextold{how attributes participate in such rules}.

\paragraph*{Mutable and immutable attributes}
We assume the attributes $\attrset$ are partitioned into two disjoint sets. The first set contains the interventional attributes (e.g., programming language, education), which are the attributes that can be changed to improve the outcome. The second set contains immutable attributes (e.g., age, gender), which cannot be changed \newtextold{through prescription}. 
Formally, let $\immutable \subseteq \attrset$, denote the set of immutable attributes and $\mutable \subseteq \attrset$ denote the set of mutable (interventional) attributes, where $\mutable \cap \immutable = \emptyset$ and the outcome $O \notin \mutable \cup \immutable$. 
\newtextold{This categorization 
intends to prohibit the infeasible or impractical recommendations to increase the outcome (e.g., changing one's age or ethnicity to improve one's income). }

\newtextold{Our prescription rules defined below as a combination of grouping and intervention patterns are motivated by the causal explanations defined in \cite{DBLP:journals/pacmmod/YoungmannCGR24}. However, the focus of this paper is to study the interplay between utility and fairness of a ruleset that was not considered in \cite{DBLP:journals/pacmmod/YoungmannCGR24}. As a result, the specific objectives and optimization problem are defined differently as discussed next.

\begin{definition}[Prescription Rule and Ruleset, Grouping and intervention patterns, and Coverage]
Given a database $D$ with mutable attributes $\mutable$ and immutable attribute $\immutable$, 
a {\em prescription rule} $r$ is a pair of patterns $r = (\patterngroup, \patterninterv)$, where (1) $\patterngroup$ is called the  {\em grouping pattern} and consists exclusively of the immutable attributes in $\immutable$, and (2)  $\patterninterv$ is called the {\em intervention pattern} and consists exclusively of the mutable attributes in $\mutable$.

By overloading notations, $\coverage(r) = \coverage(\patterngroup)$ is called the {\em coverage of rule $r$} since it captures the subset of tuples in $\db$ on which the rule $r$ applies, i.e., each tuple $t\in \db$ is either \emph{covered} or not by $\patterngroup$ of rule $r$. 
$\patterninterv$ defines the recommended intervention in the prescription rule aimed at betterment of the outcome $O$ for the subgroup that $\coverage(r)$ defines.

Given a set of prescription rules $R$ (called a ruleset), $\coverage(R) = \cup_{r \in R} \coverage(r)$, i.e., coverage of a ruleset corresponds to the subset of tuples in $\db$ that are covered by at least one of the rules in $R$.


\end{definition}


For a prescription rule $r = (\patterngroup, \patterninterv)$,  the intervention pattern $\patterninterv$ partitions the tuples defined by $\patterngroup$ into treated ($T = 1$ if $\patterninterv$ evaluates to true for a tuple) and control groups ($T = 0$ if $\patterninterv$ evaluates to false). This partition is then used to assess the causal effects of the intervention $\patterninterv$ on the outcome $O$ within the subpopulation $\coverage(r)$ which the rule $r$ applies to.  
}

\newtextold{
\begin{example}
An example prescription rule suggests that individuals aged 25-34 with dependents, should work as front-end developers. 
($\patterngroup: \texttt{age = 25-34} \land \texttt{dependents = yes}$), the intervention is working as front-end developers ($\patterninterv: \texttt{role = {frontend developer}}$). The expected CATE value is \$44,009, namely, the expected salary increase for a 25-34-year-old individual with dependents working as a frontend developer is \$44,009 per year (compared to a 25-34-year-old individual with dependents working in a different role).
\end{example}
}



\subsection{Utility of a Prescription Ruleset}
\label{subsec:utility}
\newtextold{{\bf Utility of a single rule:} To evaluate the effectiveness of a prescription rule $r = (\patterngroup, \patterninterv)$ toward improving the outcome $O$, we define its utility. The utility measures the expected impact (as CATE) of the recommended intervention on the outcome $O$ within the subpopulation $\coverage(r)$ where $r$ applies to. 
We define the overall utility, and utility for the protected and non-protected groups.


\begin{definition}[Utility of a prescription rule - overall, protected, non-protected]\label{def:utility}
Given a database instance \db\ with schema \attrset, an outcome variable $O$, a causal model \model\ on \attrset, 
a protected group $p$ defined by a pattern $\pattern_p$, and a prescription rule $r =( \patterngroup, \patterninterv)$,
\begin{enumerate}
    \item the {\em overall utility} of $r$ is defined as:
\begin{equation}
    utility(r) {:=} CATE_{\model}(\patterninterv, O~|~\patterngroup)
    \label{eq:utility-overall}
\end{equation}
\item the {\em  utility of $r$ for the protected group} $p$ is defined as:
\begin{equation}
utility_p(r) {:=} CATE_{\model}(\patterninterv, O~|~\patterngroup \land \pattern_p)    \label{eq:utility-protected}
\end{equation}
\item the {\em  utility of $r$ for the non-protected group} $p$ is defined as:
\begin{equation}
utility_{\bar{p}}(r) {:=} CATE_{\model}(\patterninterv, O~|~\patterngroup \land \pattern_{\neg p})\label{eq:utility-non-protected}
\end{equation}
\end{enumerate}

The subscript $\model$ denotes that the CATE is estimated using the causal model, \newtextold{and we drop the subscript when it is clear from context.}

If $\coverage(r) = \coverage(\patterngroup) = \emptyset$, i.e., if the rule does not apply to any tuple in $D$, then we assume that $utility(r) = 0$; similarly $utility_p(r) = 0$ if $\coverage(\patterngroup \land \pattern_p) = \emptyset$, and $utility_{\bar{p}}(r) = 0$ if $\coverage(\patterngroup \land \pattern_{\neg {p}}) = \emptyset$.
\end{definition}
}


\newtextold{The goal of prescription rules is to improve the outcome $O$ as desired. }
If the goal is to increase the outcome $O$ (e.g., increase salary), we discard rules with negative utility, as they do not help achieve this objective. Similarly, if the aim is to decrease the outcome, we ignore rules with negative utility. Throughout the paper, without loss of generality, we assume that the goal is to increase the outcome, thereby focusing on maximizing utility.


\cut{
Furthermore, it is necessary to ensure that the positivity assumption is met (see \cref{sec:background-causal}). 
\newtextold{Note that the grouping pattern $\patterngroup$ can only comprise immutable attributes and intervention patterns $\patterninterv$ can only comprise mutable attributes.}
\sr{commented out the next paragraph - but then what.}
}

\cut{
Given a prescription rule $r = (\patterngroup, \patterninterv)$, \red{if the attributes used
to define the grouping pattern $\patterngroup$ are also used to define the intervention pattern $\patterninterv$, either the control or treatment group might be empty}. \sr{this cannot be possible -- one from mutable other immutable -- this paragraph should go} For example, assume $\patterngroup = $ \verb|{County = US}|, which defines the subpopulation of people residing in the US. Setting the intervention pattern $\patterninterv$ to include the attribute \verb|Country| will may lead the intervention being empty (e.g., if $\patterninterv = $\verb|{Country = China}|, as the set of people leaving in China among people from the US is empty). If on the other hand, $\patterninterv$ contains, among other predicates, the predicate \verb|{County = US}|, then this predicate can be removed from $\patterninterv$ without affecting its definition of control and treatment groups (as both the treatment and control group includes only people from the US). 
}


\newtextold{
\textbf{Prescription to individuals when multiple rules apply:}} 
\newtextold{When dealing with a ruleset $R$, it is possible for multiple rules to apply to the same subpopulation. Specifically, if two rules $r_i {=} (\patterngroup^i, \patterninterv^i)$ and $r_j {=} (\patterngroup^j,\patterninterv^j) {\in} R$ share a non-empty intersection between their coverage, namely $\coverage(\patterngroup^i) {\cap} \coverage(\patterngroup^j) {\neq} \emptyset$, then the subpopulation defined by the pattern $\patterngroup^i {\wedge} \patterngroup^j$ will have more than one rule. 
}
In our definition below for utility of a ruleset, we refrain from applying more than one rule to a subpopulation for two reasons. First, two rules may conflict with each other. For instance, if one rule suggests individuals above \newtextold{25 to earn a Ph.D.}, while another \newtextold{recommends women over 20 pursue an MBA}, 
women above 25 would receive conflicting recommendations. 
Second, CATE is known to be non-monotonic~\cite{DBLP:journals/pacmmod/YoungmannCGR24}, implying that appending a predicate to an intervention pattern can either increase or decrease the CATE value.
\newtextold{
Therefore, employing multiple rules simultaneously for a subpopulation might yield a utility gain smaller than the individual rules. Hence when multiple rules apply to a tuple in $\db$, we assume that only one is chosen by the decision-maker.
}

\noindent
\smallskip
\newtextold{{\bf Utility of a ruleset:} For a prescription ruleset $R$, we use its {\em expected utility} on $\db$ as the utility of $R$.}

\newtextold{
\begin{definition}[Expected Utility of a Ruleset]\label{def:expected-utility}
    The {\em expected utility} of a prescription ruleset $R$ is defined as the 
    average maximum utility of an individual from $\coverage(R)$ from the rules in $R$ that applies to the individual, i.e.,  
\begin{equation}
    \text{ExpUtility}(R) = \frac{1}{n} \sum_{t \in \coverage(R)} \max_{r \in R_t} (\text{utility}(r))
\label{eq:exp-utility-all}
\end{equation}

where $R_t \subseteq R$ denotes the set of rules covering the tuple $t$, and $n = |\db|$. Note that if a rule does not apply to a tuple, its utility is zero, so the sum above is also over all $t \in D$. 

Given a protected pattern $\pattern_p$, the expected utility for the protected and non-protected groups are defined as follows:
\begin{eqnarray}
    \text{ExpUtility}_p(R) & = &  \frac{1}{n_p} \sum_{t \in \coverage_p(R)} \min_{r \in R_t} (\text{utility}(r)) \label{eq:exp-utility-protected}\\
    \text{ExpUtility}_{\bar{p}}(R) & = & \frac{1}{n_{\bar{p}}} \sum_{t \in \coverage_{\bar{p}}(R)} \max_{r \in R_t} (\text{utility}(r)) \label{eq:exp-utility-non-protected}
\end{eqnarray}
where $\coverage_p(R)$ denotes the set of protected individuals covered by $R$ and 
$n_p = |\coverage_p(R)|$
(similarly $\coverage_{\bar{p}}()$ and $n_{\bar{p}}$).
\end{definition}
}

\cut{
Similarly, we define the expected utility for the protected group by calculating the expected utility for a randomly sampled protected individual:
\[
\text{ExpUtility}_p(R) = \frac{1}{n_p} \sum_{t \in coverage_p(R)} \min_{r \in R_t} (\text{utility}(r))
\]
where $coverage_p()$ denotes the set of protected individuals covered by a set of rules $R$ and $n_p = |\pattern_p(\db)|$.

In a similar manner, we define the expected utility for a randomly sampled non-protected individual, referred to as the expected utility of non-protected.
}

\cut{Our goal is to generate a prescription ruleset $R$ where the outcome is fixed across all rules (e.g., all rules aim to increase salary). These rules may overlap, but they all target the same outcome variable. }

\newtextold{
Note the difference between formulas (\ref{eq:exp-utility-protected}) and (\ref{eq:exp-utility-all}, \ref{eq:exp-utility-non-protected}).
Since we do not assume any restriction on which rule is chosen for a tuple when multiple rules apply, we do a conservative worst-case analysis on fairness.
We assume that protected individuals choose the worst possible rule, while the rest choose the best possible one. This ensures that the expected utility for the protected group in reality (irrespective of the rule chosen for each protected tuple) will be at least as high as the expected utility from the least beneficial relevant prescription rule for this group.
}


\subsection{Size of a Prescription Ruleset}\label{subsec:size}

\cut{
Given an outcome variable $O \in \attrset$ (e.g., salary), our goal is to find a set of prescription rules to increase $O$ for all individuals and ensure that the protected group will ultimately be in a better condition than before applying the rules (a fairness constraint). We favor sets with a small number of effective rules, that apply to a large portion of the population (coverage constraint). We begin by discussing the size and utility objectives. We then introduce the coverage and fairness constraints. Finally, we present our problem definition.


Given a set $R$ of prescription rules, we define its size and expected utility as follows. 

\noindent
\textbf{Size}:
}
\newtextold{The size of a prescription ruleset is the number of rules in that set, denoted by $size(R)$. Ideally, we want to find a small-size ruleset. The intuition is that, }
the fewer the rules in a set, the easier it is to understand the suggested interventions. 
\newtextold{Suppose we want to find a ruleset with high utility without specifying a constraint on the size. The following lemma shows that the best strategy is to return the {\em optimal rule} that applies to each individual.}
\newtextold{That is, to maximize utility, prescribing a personalized rule for each individual may lead to the best utility. Specifically, }
we can show that for every rule $r = (\patterngroup, \patterninterv)$, there exists a subgroup $g' {\subseteq} \coverage(\patterngroup)$ and a intervention $\pattern_{t'}$ s.t the utility of  $r' {=} (g', \pattern_{t'})$ is greater than that of the original rule $r$. 


\begin{lemma}
\label{lemma:individual_rules}
   Given a rule $r {=} (\patterngroup, \patterninterv)$, there exists a rule $r' {=} (\pattern_{g'}, \pattern_{t'})$ s.t $\pattern_{g'} {\subset} \coverage(\patterngroup)$ and $utility(r') {\geq} utility(r)$. 
\end{lemma}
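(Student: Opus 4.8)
The plan is to prove the lemma by leaving the intervention pattern unchanged, i.e., setting $\pattern_{t'} = \patterninterv$, and only \emph{refining} the grouping pattern. The key observation is that, by Equation~(\ref{eq:conf-cate}), the utility $utility(r) = \mathrm{CATE}_{\model}(\patterninterv, O \mid \patterngroup)$ is an \emph{average} of conditional treatment effects taken over the subpopulation $\coverage(\patterngroup)$. Since a (weighted) average never exceeds its largest term, some sub-subpopulation of $\coverage(\patterngroup)$ must witness a treatment effect at least as large as the group-wide average. The crux is then to exhibit such a subpopulation that is (i) definable by a pattern over the immutable attributes, so that it yields a valid grouping pattern $\pattern_{g'}$, and (ii) a strict subset of $\coverage(\patterngroup)$.

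First I would make the averaging argument precise. For any immutable attribute $A$ whose value is not already fixed by $\patterngroup$, the law of total expectation applied to the outer expectation over the confounders $Z$ in Equation~(\ref{eq:conf-cate}) gives
\[
utility(r) = \sum_{a \in \dom(A)} \Pr[A = a \mid \patterngroup]\cdot \mathrm{CATE}_{\model}(\patterninterv, O \mid \patterngroup \wedge (A = a)),
\]
expressing $utility(r)$ as a convex combination of the CATE values of the refined groups $\patterngroup \wedge (A = a)$. Consequently there exists a value $a^\star$ with $\mathrm{CATE}_{\model}(\patterninterv, O \mid \patterngroup \wedge (A = a^\star)) \geq utility(r)$. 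Setting $\pattern_{g'} = \patterngroup \wedge (A = a^\star)$ and $\pattern_{t'} = \patterninterv$ yields a rule $r' = (\pattern_{g'}, \pattern_{t'})$ with $\coverage(\pattern_{g'}) \subseteq \coverage(\patterngroup)$ and $utility(r') \geq utility(r)$, as required.

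The main obstacle is guaranteeing \emph{proper} containment $\coverage(\pattern_{g'}) \subset \coverage(\patterngroup)$ together with pattern-definability. The convex-combination step is non-vacuous precisely when some value has $\Pr[A = a \mid \patterngroup] < 1$, in which case the chosen $\coverage(\patterngroup \wedge (A = a^\star))$ is automatically a proper subset; the only degenerate case is when all tuples of $\coverage(\patterngroup)$ agree on every immutable attribute, so that the group is already ``atomic'' and admits no refining immutable predicate. I would dispatch this case by noting that an atomic group behaves as a single representative individual, so refining instead on one of the confounder attributes used to compute the inner effect still produces a strict subpopulation with \emph{equal} (hence non-smaller) utility, which suffices for $\geq$. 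A secondary technical point, which I would state explicitly, is that the total-expectation decomposition above is legitimate for the adjustment-based CATE of Equation~(\ref{eq:conf-cate}): adding a predicate to $\patterngroup$ only further partitions the outer average over $Z$ while leaving the inner interventional differences $\mathbb{E}[O \mid T{=}1,\cdot] - \mathbb{E}[O \mid T{=}0,\cdot]$ unchanged, so the identity follows from linearity of expectation.
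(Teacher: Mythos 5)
Your proof is correct and rests on the same core observation as the paper's: the utility of $r$ is an average, and an average cannot exceed its largest term. The difference is the granularity of the decomposition. The paper writes $utility(r)$ as the mean of \emph{per-tuple} utilities $utility_i(\patterninterv)$ over $\coverage(\patterngroup)$, takes $i^*=\arg\max_i utility_i(\patterninterv)$, and forms the rule $(i^*,\patterninterv)$ on the singleton subgroup; you instead decompose the CATE over the values of a single immutable attribute $A$ not yet fixed by $\patterngroup$ and refine to $\patterngroup\wedge(A=a^\star)$. Your route buys two things the paper glosses over: the refined group is genuinely expressible as a pattern (a conjunction of predicates over immutable attributes), whereas a single tuple $i^*$ need not be pattern-definable if another tuple agrees with it on all immutable attributes; and you work only with subgroup-level CATEs, which are identifiable under the paper's assumptions, rather than with individual treatment effects $utility_i(\patterninterv)$, which are not. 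You are also right to flag that the total-expectation step needs justification --- it holds for the interventional quantity because $A$ is immutable, hence not a descendant of the treatment, so $\Pr[A=a\mid \doop(T=t),\patterngroup]=\Pr[A=a\mid\patterngroup]$ does not depend on $t$; stating this explicitly would make your argument tighter than the paper's. One small caveat: your handling of the ``atomic'' degenerate case (refining on a confounder to get a strict subpopulation of equal utility) presupposes that some attribute still varies within the group; if the group is a single tuple repeated, no proper nonempty refinement exists at all, but the same degeneracy afflicts the paper's proof, and the lemma is only used to argue that unconstrained optima degenerate toward per-individual rules, for which your version fully suffices.
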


\newtextold{This property implies that the number of prescription rules in the optimal solution is $O(|D|)$, making it impractical to implement in real-world scenarios. For instance, consider a policy enacted by a government official to allocate healthcare resources based on patient data. If the number of rules scales linearly with the size of the dataset, it would become infeasible to apply the policy effectively across a large population. }
Therefore, we limit the number of recommended rules.
One approach is to impose a strict limit on the number of rules selected. However, pre-setting this constraint often requires tuning to balance utility and comprehensibility. Therefore, we incorporate the number of rules as an objective, considering rulesets with fewer rules to be more desirable, as was done in \cite{lakkaraju2016interpretable}.



\cut{
\vspace{1mm}
\noindent
\textbf{Rule selection}:
As outlined in \cref{subsec:utility}, when multiple rules apply to an individual, we assume they choose only one. For worst-case analysis, we assume that non-protected individuals select the worst possible rule, while the rest choose the best possible one (using min for protected expected utility calculation and max for expected utility and non-protected expected utility calculation). 

\sr{read up to here}
}

\subsection{Coverage Constraints}
\label{subsec:coverage_constraint}
We consider two types of coverage constraints: \emph{group coverage}, where the goal is to find a solution that covers a predefined fraction of protected individuals and a certain fraction of the entire population, and \emph{rule coverage}, where every selected rule must cover a certain fraction of the population and protected individuals.\\
\noindent
{\bf{Group Coverage}} Given two thresholds $\theta, \theta_p {\in} [0,1]$, we say that a ruleset $R$ satisfies the group coverage constraint if $R$ covers at least a $\theta$ fraction of the population, and a $\theta_p$ fraction of the protected subpopulation. Formally, both conditions are satisfied: (i) $Coverage(R) {\geq} \theta {\cdot} |\db|$, (ii) $Coverage_p(R) {\geq} \theta_p {\cdot} |\pattern_p(\db)|$, 
where $Coverage_p(R)$ denotes the number of covered protected individuals by $R$. \\
\noindent
{\bf Rule Coverage}
 Given two thresholds $\theta, \theta_p {\in} [0,1]$, we say that a ruleset $R$ satisfies the rule coverage constraint if every rule $r {\in} R$ covers at least a $\theta$ fraction of the population, and at least a $\theta_p$ fraction of the protected subpopulation. Formally, both of the following conditions hold: (i) For every $r {\in} R$: $coverage(r) {\geq} \theta {\cdot} |\db|$, (ii) For every $r {\in} R$: $coverage_p(r) {\geq} \theta_p {\cdot} |\pattern_p(\db)|$,
where $coverage_p(r)$ denotes the number of covered protected individuals by $r$.

\subsection{Fairness Constraints}
\label{subsec:fairness_constraint}
We study two definitions of fairness: statistical parity (SP)~\cite{mehrabi2021survey}, and bounded group loss (BGL)~\cite{agarwal2019fair}. Those definitions are based on equivalent notions of fairness for regression tasks~\cite{agarwal2019fair}. We next provide an extension for these definitions to causal estimates.

Group and individual fairness are two key concepts in algorithmic fairness~\cite{stoyanovich2020responsible,binns2020apparent,garcia2021maxmin}. Group fairness aims to ensure that different groups receive similar outcomes. Individual fairness focuses on treating similar individuals similarly, meaning that if two individuals are alike in relevant aspects, they should receive similar outcomes. Both approaches aim to reduce bias, with the choice of which approach to adopt depending on the specific context.
Next, we present four types of fairness constraints: SP and BGL, each of which can be applied to ensure group or individual fairness.

\subsubsection{Statistical parity}
In SP, the goal is to ensure that the gain in the utility of a protected individual is similar to that of any individual from the non-protected group.

\noindent
\textbf{Group Fairness}:
Intuitively, if we randomly sample a protected individual, the expected gain should be almost the same as that of an individual from the non-protected group.
Formally:\\
$|\text{ExpUtility}_p(R) {-}\text{ExpUtility}_{\bar{p}}(R)| {\leq} \epsilon$,
 where $\epsilon {>} 0$ is a threshold. 

\vspace{1mm}
\noindent
\textbf{Individual Fairness}:
Individual fairness says that the expected gain of every protected individual is similar to that of an individual from the non-protected group. That means that the expected utility of each rule $r{\in} R$ on a protected individual should be similar to that of an individual from the non-protected group.
Formally, for every $r{\in} R$,
$|utility_p(r) {-} utility_{\bar{p}}| {\leq} \epsilon$, where $\epsilon {>} 0$ is a threshold.

\subsubsection{Bounded group loss (BGL)}: Fair regression with BGL minimizes the overall loss while controlling the worst loss in the protected group~\cite{agarwal2019fair}.
In our context, this translates to the following constraint: When selecting an individual from the protected group, the utility increase should exceed a specified threshold $\tau {\geq} 0$.

\noindent
\textbf{Group Fairness}: We aim to ensure that the expected utility of a randomly sampled protected individual within $Coverage(R)$ is above a given threshold $\tau$.
Formally,
$\text{ExpUtility}_p(R) {\geq} \tau$.

\smallskip
\noindent
\textbf{Individual Fairness}: We aim to ensure that the gain of every protected individual from $Coverage(R)$ exceeds a threshold $\tau$.
Therefore, a ruleset $R$ satisfies the individual loss constraint if the utility of every rule $r$ on protected individuals is at least $\tau$. Formally, for every rule $r{\in} R$, $utility_p(r) {\geq} \tau$.



\subsection{The \probName\ Problem}\label{subsec:problem}
We are finally ready to present the problem we study in this paper. If we did not have \newtextold{{\em fairness or coverage constraints}}, then our goal is to select a small-size perception ruleset \newtextold{with high expected utility.} 
\newtextold{However, as demonstrated in the introduction, 
not considering fairness constraints may result in a ruleset that are only highly beneficial to 
a small, non-protected subset of the population. Therefore, we extend our problem definition to include coverage and fairness constraints.  
We can apply any of SP or BGL group or individual fairness constraints  (\Cref{subsec:fairness_constraint}), as well as rule or group coverage constraints \Cref{subsec:coverage_constraint}), along with no fairness or coverage constraints, resulting in 18 distinct problem variants. The choice of which constraints to apply is left to the user as it may be application-dependent, and is discussed below. To define the generic problem, we use $R {\models} \fairnessconstraints$ and $R {\models} \coverageconstraints$ to denote that a ruleset $R$ satisfies a given fairness constraint $\fairnessconstraints$ and a given coverage constraint $\coverageconstraints$ respectively (if no constraints are given, these conditions are trivially satisfied). 
We assume that a set of candidate rules $\{r_i\}_{i=1}^l$ has been already mined and is available as an input to the problem.}

\newtextold{
\begin{definition}[\probName\ under fairness and coverage constraints]
\label{def:problem}
Given a database $\db$, a causal model \model, an outcome attribute $O$, a fairness constraint $\fairnessconstraints$, a coverage constraint $\coverageconstraints$, and a collection of prescription rules $\{r_i\}_{i=1}^l$, a subset $R {\subseteq} \{r_i\}_{i=1}^l$ of prescription rules is called {\em valid} if (1) $R \models \fairnessconstraints$ and (2) $R \models \coverageconstraints$.
The goal is to find a valid subset of rules $R^* {\subseteq} \{r_i\}_{i=1}^l$ s.t 
\begin{equation}
R^* {=} argmax_{R {\subseteq} \{r_i\}_{i=1}^l} \left[ \lambda_1 {\cdot} (l - size(R)) {+} \lambda_2 {\cdot} ExpUtility(R) \right]
\label{eq:optimization}
\end{equation}
where $\lambda_1,\lambda_2$ are non-negative weights. 
\end{definition}
}

\begin{figure*}[t]
    \centering
    
\includegraphics[scale=0.43]{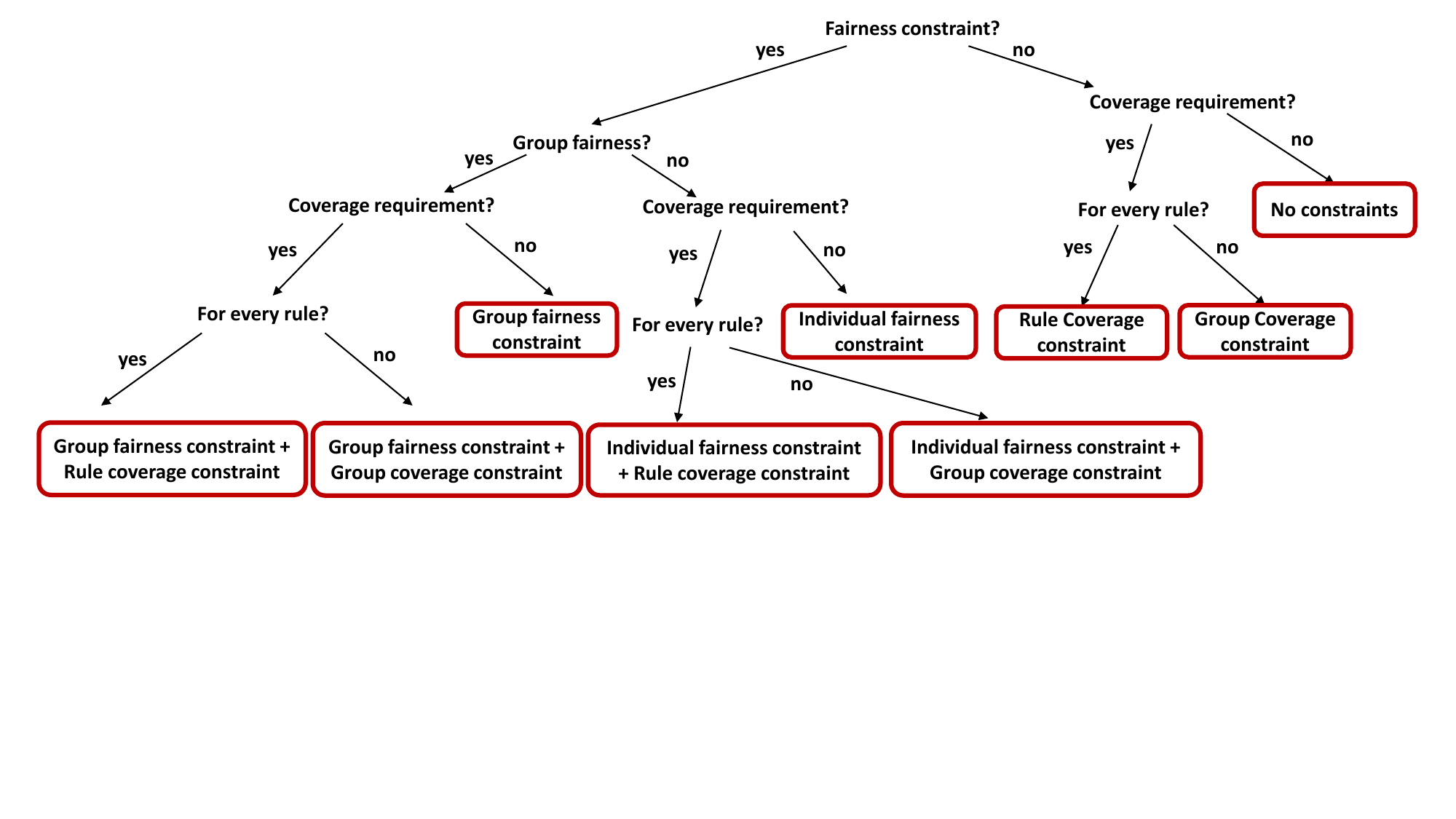}
\vspace{-32mm}
    \caption{A decision tree for selecting the appropriate problem variant.}
    \label{fig:decision_tree}
    \vspace{-5mm}
\end{figure*}

$\lambda_1$ and $\lambda_2$ may be tuned by the user. \newtextold{The above optimization problem, as expected, is NP-hard even for simple variants, although some constraints are matroid constraints and therefore are amenable to greedy approaches (discussions and proofs in the Appendix)}. 
%
 Therefore, we obtain efficient algorithms that work well in practice in \Cref{sec:algo} and experimentally demonstrate the effect of different constraints on the results in \Cref{sec:casestudy}. 


Since we have several options for fairness and coverage constraints, a natural question is which version to use. 
We observe that there is no one-size-fits-all solution 
and the best choice depends on the specific application. For instance, going for individual fairness gives a stronger fairness guarantee at the expense of possible lower utility to everyone. In addition, the complexity of different versions can vary. To assist in making this decision, we summarize the process through a decision tree that guides users in selecting the most suitable variant for their needs, presented in Figure \ref{fig:decision_tree}. The decision to choose between SP or BGL fairness is left to the user.
In Section \ref{sec:casestudy}, we present a case study that empirically compares the obtained rulesets under different problem variants.

\section{The \sysName\ Algorithm}
\label{sec:algo}

A brute-force approach, which considers all grouping and intervention patterns to form prescription rules, results in long runtimes (as we show in Section \ref{exp:scalability}). Instead, we propose a more efficient algorithm, called \sysName\ (\underline{Fair} \underline{CA}usal \underline{P}rescription), which avoids generating every possible prescription rule. \sysName\ can be adapted for any variant of the \probName\ problem. For simplicity, we first describe \sysName\ for the case with SP group fairness and group coverage constraints. We then explain how it can be modified to accommodate other variants.

Our algorithmic framework is outlined in ~\cref{algo:full_algo}.
\sysName\ consists of three parts: (1) generating grouping patterns by using the Apriori algorithm~\cite{agrawal1994fast} (line \ref{line:init}), (2) identifying promising intervention patterns for each grouping pattern by using a lattice traversal approach \cite{asudeh2019assessing}, and (3) finding a set of prescription rules using a greedy approach. We leverage existing solutions (e.g., \cite{asudeh2019assessing, agrawal1994fast,pastor2021looking,DBLP:journals/pacmmod/YoungmannCGR24}) where applicable, and develop novel techniques where necessary.
Specifically, the first step follows the same approach as CauSumX ~\cite{DBLP:journals/pacmmod/YoungmannCGR24}, while the second and third steps introduce novel methods.


\reva{We show that the prescription rule selection problem is
NP-hard even in simple settings (proofs deferred to the Appendix), and therefore developing effective heuristics considering several constraints is non-trivial.
\sysName\ avoids generating all possible rules  (as their number grows exponentially with the database size) and therefore does not perform an exhaustive search and may not return an optimal answer.} If steps 1 and 2 were replaced by a brute-force approach that generates all rules, then a greedy approach for selecting a ruleset could approximate the optimal solution for certain problem variants, as the objective is a non-negative, monotone submodular function (even with a rule coverage or individual fairness constraints which are matroid constraints). However, 
other constraints are harder to satisfy. Future work will explore the complexity of various problem variants and establish theoretical bounds for finding approximate solutions. 

\reva{Despite the fact \sysName\ does not provide formal guarantees for the prescription ruleset selection problem, we emphasize that each selected rule represents an intervention that is statistically significant. Specifically, based on causal
analysis, the expected utility reflects the anticipated average increase in the outcome for the specific subpopulation to which the
rule applies.}
\begin{algorithm}[t]
\footnotesize
\SetKwInOut{Input}{input}\SetKwInOut{Output}{output}
\LinesNumbered
\Input{A database relation \db, a protected group defined by the pattern $\pattern_p$ and an outcome variable $O$}
\Output{A set $\Phi$ 
of prescription rules.} \BlankLine
\SetKwFunction{NextBestExplanation}{\textsc{NextBestExplanation}}
\SetKwFunction{SolveLP}{\textsc{SolveLP}}
\SetKwFunction{Greedy}{\textsc{ApplyGreedy}}
\SetKwFunction{GetGroupingPatterns}{\textsc{GetGroupingPatterns}}
\SetKwFunction{ExplanationSummary}{\textsc{ExplanationSummary}}
\SetKwFunction{GetTreatment}{\textsc{GetIntervention}}
$\Phi \gets \emptyset$\;
$\mathcal{G} \gets \GetGroupingPatterns (\db, O)$\tcp*{\cref{subsec:grouping_patterns}} \label{line:init}   
 \For{$\pattern_g \in \mathcal{G}$}{\label{l:iterate-candidates}
    $\pattern_t \gets \GetTreatment(\pattern_g, O, \pattern_p, \db)$\tcp*{\cref{subsec:treatment_patterns}}\label{l:top-treatment}
    $\Phi \gets \Phi \cup (\pattern_g, \pattern_p)$\\
 }
  $\Phi \gets \Greedy(\Phi, O, \pattern_p)$\tcp*{\cref{subsec:step_3}}\label{l:step3}

\Return $\Phi$\label{line:returnTop}\\
\caption{The \sysName\ algorithmic framework.}\label{algo:full_algo}
\end{algorithm}

\subsection{Step 1: Mining Grouping Patterns}
\label{subsec:grouping_patterns}
Considering every possible grouping pattern is infeasible as their number is exponential ($O(agrmax_{A_i {\in} \attrset}|\dom(A_i)|^{|\attrset|}$). 
Instead, as done in previous work~\cite{DBLP:journals/pacmmod/YoungmannCGR24,pastor2021looking}, we utilize the Apriori algorithm~\cite{agrawal1994fast} to generate candidate patterns. The Apriori algorithm gets a threshold $\tau$, and ensures that
the mined grouping patterns are present in at least $\tau$ tuples of $\db$.
The algorithm guarantees that each mined pattern covers at least $\tau$ tuples from $\db$, making them promising candidates for covering many tuples from $\db$.



\subsection{Step 2: Mining Intervention Patterns}
\label{subsec:treatment_patterns}
Our next goal is to identify an intervention pattern $\patterninterv$ for each mined grouping pattern $\patterngroup$ that maximizes utility (i.e., treatments with the highest CATE for $\patterngroup$) while ensuring fairness to the protected group.
Unlike step 1, this step requires novel techniques for finding treatments that are both fair and have high utility.

Since the number of potential intervention patterns for $\patterngroup$ can be large (exponential in $|\attrset|$), we employ a greedy lattice-traversal~\cite{asudeh2019assessing,DeutchG19} approach, inspired by \cite{DBLP:journals/pacmmod/YoungmannCGR24,pastor2021looking}.
This allows us to materialize and assess the CATE only for promising patterns. 

Concretely, the space of all intervention patterns
can be represented as a lattice where nodes correspond to patterns and there is an edge between $\patterninterv^1$ and $\patterninterv^2$
if $\patterninterv^2$
can be obtained
from $\patterninterv^1$ by adding a single predicate. This lattice can be traversed in a top-down fashion. 
Since not all nodes correspond to treatments with a positive CATE, we only materialize nodes if all their parents have a positive CATE. 
We note that this might lead the algorithm to overlook certain relevant intervention patterns.
However, as shown in \cite{DBLP:journals/pacmmod/YoungmannCGR24}, combining patterns that exhibit a positive CATE is highly likely to result in an intervention with a positive CATE as well. 

When a group fairness constraint is imposed, instead of searching for the treatment with the highest CATE, we search for the treatment that is "fair" by that it maximizes CATE for both protected and non-protected groups, while minimizing disparities.

To identify the most fair treatment, we define the \emph{benefit} of an intervention pattern as follows. Intuitively, we penalize the treatment based on the difference between the utility for the non-protected group and the utility provided to the protected group. The larger the difference, the lower the benefit of the treatment. Formally, the benefit of a rule $r = (\patterngroup,\patterninterv)$ is defined as: 
\vspace{-2mm}
\[
    benefit(r)=
\begin{cases}
   \frac{utility(r)}{1+utility_{\bar{p}}(r) - utility_p(r)},& \text{if } utility_{\bar{p}}(r) \geq utility_p(r)\\
   utility(r), &\text{otherwise}
\end{cases}
\]
\vspace{-3mm}


We implement two optimizations to improve efficiency:
\textbf{(i)} we discard attributes that do not have a causal relationship with the outcome, since such attributes have no impact on CATE values. We can detect such attributes by utilizing the input causal DAG.
\textbf{(ii)} The process of extracting intervention patterns for each grouping pattern can be performed in parallel since this procedure is dependent only on the grouping pattern.

\subsection{Step 3: A Greedy Approach}
\label{subsec:step_3}
The final step involves finding a solution from the rules mined in Steps 1 and 2. We propose a greedy algorithm that optimizes the problem’s objectives.
Intuitively, the algorithm operates as follows: at each iteration, it selects the next best rule that maximizes expected utility, benefit (as defined in \cref{subsec:treatment_patterns}), and coverage. Once the coverage constraints are met, the focus shifts to maximizing benefit and utility. The algorithm stops when the additional gain becomes negligible, as the number of rules is not predetermined.

Formally, the next best rule is determined as follows. Let $\{r_j\}_{j=1}^l$ denote the candidate rules and $R_i$ is the ruleset selected in the first $i$ iterations.
The score of a rule $r$ w.r.t $R_i$ is defined as:
\vspace{-3mm}
\begin{multline*}
\small
score(r){=} Coverage(R_i {\cup} \{r\}) + benefit(R_i {\cup} \{r\}) + ExpUtility(R_i {\cup} \{r\})
\end{multline*}
\vspace{-1mm}
The next best rule $r_{i+1}$ to add in case the coverage constraints are not met yet is defined as:
$r_{i+1}^* = argmax_{r_{i+1} \in \{r_j\}_{j=1}^l \setminus R_i} score(r_{i+1})$
In case the coverage constraints are met, ignore the coverage term. 
The algorithm stops at the first iteration $i$ where the score of the selected rule $r_i$ falls below a predefined threshold, indicating that the marginal gain from $r_i$ is negligible.

\subsection{Adjustments to Other Variants}
We explain how \sysName\ can be adjusted to solve other problem variants. 
We set the Apriori's threshold to ensure that each mined grouping pattern covers a sufficient number of individuals when a rule coverage constraint is imposed (step 1). Without coverage constraints, the Apriori threshold can be set to any value. 

Without fairness constraints, in Step 2, the goal is to identify the intervention with the highest CATE value (as was done in \cite{DBLP:journals/pacmmod/YoungmannCGR24}). Whit an individual fairness constraint, each rule must satisfy this constraint, so we only select interventions that are guaranteed to meet the constraint while maximizing CATE (step 2).

In case a group BGL fairness constraint is imposed, we define the benefit of a rule $r = (\patterngroup,\patterninterv)$ as follows. Intuitively, we penalize the treatment based on the difference between the minimum required utility for the protected group and the utility provided to the protected group by this treatment. The larger the difference, the lower the benefit of the treatment. Formally: 
\[
    benefit(r)=
\begin{cases}
   \frac{utility(r)}{1 + \tau - utility_p(r)},& \text{if } \tau\geq utility_p(r)\\
   utility(r), &\text{otherwise}
\end{cases}
\]
where $\tau$ is the threshold for the BGL fairness constraint. This benefit definition is applied in Step 2 of the algorithm to identify fair and effective treatments for the mined grouping patterns.


\vspace{1mm}
\noindent
\textbf{Runtime complexity analysis}:
The maximum number of rules in a database $\db$ with attributes $\attrset$ is bounded by $|\db|^{|\attrset|}$ (considering both grouping and intervention patterns and the active domain of attributes), which is polynomial in terms of \emph{data complexity}, assuming a fixed schema \cite{Vardi82}. The final greedy step is also polynomial in the number of rules considered. Additional operations, such as calculating CATE values, are polynomial in $\db$, leading to worst-case polynomial data complexity. 
As we demonstrate in \cref{exp:scalability}, our algorithm is capable of efficiently handling large datasets.

\vspace{-1mm}
\section{Case Study}
\label{sec:casestudy}

The objective of this case study is to evaluate the impact of various constraints on the solution.
We analyze two datasets, (1) German Credit (German in short) and (2) Stack Overflow (SO in short), each with a corresponding protected group, and assess the rules chosen by \sysName\ under different constraints. We aim to understand how these constraints influence coverage, utility, and disparities (for fairness) between protected and non-protected groups. \revb{We present example chosen rules under different configurations. We chose the rules by randomly picking one from each category (one that favors the protected group, one that favors the non-protected, and another that is more balanced). The full lists of rules are available in \cite{fullversion}.}

\vspace{1mm}
\paratitle{Datasets \revb{\& protected groups}}
We examine two commonly used datasets:
(1) Stack Oveflow (SO)~\cite{stackoverflowreport}, as described in \Cref{example:ex1}. Here, the goal is to increase salary. (2) German Credit~\cite{asuncion2007uci}, which contains details of bank account holders, including demographic and financial information. Here, the goal is to increase the credit score (binary). 
The corresponding 
causal DAG was constructed using  
~\cite{youngmann2023causal}. The datasets' statistics are presented in \cref{tab:datasets}. The protected groups were selected to represent subgroups where the desired outcome was relatively low and sufficiently large to ensure the discovery of statistically significant rules. \revb{The protected group in Stack Overflow is defined as individuals from countries with a low GDP, which constitutes 21.5\% of the data (the GDP attribute is categorical in this dataset). In the German data, the protected group is defined as single females, which constitutes 9.2\% of the data.} 

\begin{table}
	\centering
\small
		\caption{Examined datasets.}
			\label{tab:datasets}
	\begin{tabular} 
 {p{8mm}cccp{37mm}}
		\toprule
	\textbf{Dataset} & \textbf{Tuples}& \textbf{Atts}& \textbf{Mut Atts}&\textbf{Protected Group}
	 \\
		\midrule 

SO&38K&20&10&People from countries with a low GDP (\revb{21.5\% of the data})\\
\hline
German&1000&20&15&Single Females (\revb{9.2\% of the data}) \\
				\bottomrule
	\end{tabular}
\end{table}

\begin{table*}[h!]
\centering
\small
\caption{Comparison of Solutions in Terms of Size, Coverage, Expected Utility and Unfairness. \revb{IDS and FRL were used to either (i) replace step 1 of \sysName\ to find grouping patterns; (ii) replace step 2 of \sysName\ to find intervention patterns}.}
\label{tab:problem_variants}
\begin{tabular}{lccccccc}
\toprule
\textbf{Stack Overflow (SP fairness)} & \textbf{\# rules} & \textbf{coverage} & \textbf{coverage pro} & \textbf{exp utility} & \textbf{exp utility non-pro}&\textbf{exp utility pro} &\textbf{unfairness} \\



\midrule 
No constraints  & 20& 99.91\%& 99.98\%& \textbf{32634.2}& \textbf{32626.98}& \textbf{18432.66}& 14194.32 \\
Group coverage  &20& 99.84\%& 99.88\%& 32597.02& 32595.1& 18340.29& 14254.81 \\
Rule coverage  & 10& \textbf{99.99}\%& \textbf{99.99}\%& 22301.77& 22292.02& 16604.92& \textbf{5687.1}\\
Group fairness  & 8& 97.52\%& 97.81\%& 27870.77& 27998.47& 17998.66& 9999.81 \\
Individual fairness   & 20& \textbf{99.99}\%& \textbf{99.99}\%& 28014.58& 28256.35& 14241.07& 14015.28 \\
Group coverage, Group fairness  & 11& 97.95\%& 98.85\%& 27934.76& 28144.58& 18145.23& 9999.35 \\
Rule coverage, Group fairness  & 12& 99.96\%& 99.89\%& 22284.1& 22279.93& 16594.77& 5685.16\\
Group coverage, Individual fairness  & 20& 99.74\%& 99.88\%& 28057.78& 28284.25& 15128.91& 13155.34\\
Rule coverage, Individual fairness  &13& \textbf{99.99}\%& \textbf{99.99\%}& 18591.41& 18606.68& 12797.15& 5809.53 \\

\hdashline[1pt/3pt] 
\revb{IDS (IF clause as grouping pattern)} &\revb{16}&\revb{100\%} &\revb{100\%} &\revb{29770.43}&\revb{29988.1} &\revb{16440.82}& \revb{13547.28}\\

 \revb{IDS (IF clause as intervention pattern)} &\revb{16}&\revb{100\%} &\revb{100\%} &\revb{27763.89}&\revb{27714.9} &\revb{16888.1}& \revb{10826.8}\\


 \revb{FRL (IF clause as grouping pattern)} &\revb{9}&\revb{99.5\%} &\revb{98.85\%} &\revb{27777.43}&\revb{27782.3} &\revb{18891.22}& \revb{8891.08}\\


   \revb{FRL (IF clause as intervention pattern)} &\revb{9}&\revb{100\%} &\revb{100\%} &\revb{28999.22}&\revb{28997.8} &\revb{16453.8}& \revb{12544}\\
        
\midrule

\textbf{German Credit (BGL fairness)} & \textbf{\# rules} & \textbf{coverage} & \textbf{coverage pro} & \textbf{exp utility} & \textbf{exp utility non-pro}&\textbf{exp utility pro}&\textbf{\common{unfairness}}   \\
\midrule 
No constraints  & 17& \textbf{100.0\%}& \textbf{100.0\%}& \textbf{0.39}& \textbf{0.39}& 0.27& 0.12 \\
Group coverage  &18& \textbf{100.0\%}& \textbf{100.0\%}& \textbf{0.39}& \textbf{0.39}& 0.3& 0.09 \\
Rule coverage  & 6& 96.0\%& \textbf{100.0\%}& 0.31& 0.31& 0.3& 0.01\\
Group fairness  & 18& \textbf{100.0\%}& \textbf{100.0\%}& \textbf{0.39}& \textbf{0.39}& 0.3& 0.09 \\
Individual fairness   & 20& \textbf{100.0\%}& \textbf{100.0\%}& 0.37& 0.37& 0.23& 0.14 \\
Group coverage, Group fairness  & 6& \textbf{100.0\%}& \textbf{100.0\%}& 0.36& 0.37& \textbf{0.31}& 0.06 \\
Rule coverage, Group fairness  & 3& 90.0\%& \textbf{100.0\%}& 0.29& 0.29& \textbf{0.31}& \textbf{-0.02}\\
Group coverage, Individual fairness  & 20& \textbf{100.0\%}& \textbf{100.0\%}& 0.37& 0.37& 0.23& 0.14\\
Rule coverage, Individual fairness  &8& 96.8\%& \textbf{100.0\%}& 0.29& 0.29& 0.23& 0.06 \\

 \hdashline[1pt/3pt] 
 \revb{IDS (IF clause as grouping pattern)} &\revb{12}& \revb{100\%}&\revb{100\%} &\revb{0.35}&\revb{0.35} &\revb{0.3}&\revb{0.05} \\

  \revb{IDS (IF clause as intervention pattern)} &\revb{12}& \revb{100\%}&\revb{100\%} &\revb{0.34}&\revb{0.34} &\revb{0.24}&\revb{0.1} \\


 \revb{FRL (IF clause as grouping pattern)} &\revb{13}&\revb{100\%} &\revb{100\%} &\revb{0.26}&\revb{0.26} &\revb{0.21}&\revb{0.05} \\

 \revb{FRL (IF clause as intervention pattern)} &\revb{13}&\revb{100\%} &\revb{100\%} &\revb{0.3}&\revb{0.3} &\revb{0.23}&\revb{0.07} \\
 

\bottomrule
\end{tabular}
\vspace{-4mm}
\end{table*}





\vspace{1mm}
\paratitle{Default parameters} Unless otherwise specified, the threshold of the Apriori algorithm is set to 0.1. 
For the SO dataset, the coverage thresholds are set to 0.5. 
The threshold for the SP and BFL fairness constraint is set at \$10k. For the German dataset, the coverage thresholds are set at
30\% and the fairness thresholds are set at 0.1. This configuration allows for the generation of multiple rules. 

\cut{
\vspace{1mm}
\paratitle{\common{Results Summary}} \common{Our analysis indicates the following findings:
$1.$ Achieving rule coverage is harder than group coverage, as many useful (i.e., high utility) rules apply only to a small fraction of the population. This is not surprising, as it follows from what we proved in Lemma~\ref{lemma:individual_rules}, where we showed that the optimal strategy is to suggest a personalized rule for each individual. \\
$3.$ Without fairness constraints, we observe a significant disparity in the expected utility between the protected and non-protected.\\
$4.$ As expected, with SP fairness constraints, the difference in expected utility between protected and non-protected is bounded. \\
$5.$ As expected, with BGL fairness constraints, which consider only the minimal gain for the protected without regard for non-protected, we may still observe a disparity between the two groups. }
}


\vspace{1mm}
The results are shown in \cref{tab:problem_variants}, illustrating the trade-off between utility, coverage, and fairness. Without constraints, the expected utility is substantially higher, but this comes at the expense of greater disparities between protected and non-protected groups (as indicated by the unfairness score --- the difference between the expected utility of protected and non-protected). In the examined scenarios, coverage for both groups was achieved without constraints, but other protected group definitions may require them.


\paratitle{Stack Overflow}
Observe that while the expected utility for both protected and non-protected groups reaches its highest value in the no-constraints variant, the unfairness score is very high. This indicates that achieving SP fairness requires compromising on the expected utility for both protected and non-protected groups. Interestingly, rule coverage and individual fairness are difficult to achieve, as most rules fail to meet these criteria. This leads to lower expected utility for all groups. On the other hand, group coverage and fairness constraints are easier to satisfy, as they offer more flexibility by allowing the selection of some unfair rules alongside those specifically designed for the protected group.

\vspace{-1mm}
\begin{ruleset}{\textbf{3 Selected Rules out of 11 for SO (SP group fairness)}}
\small
    $\triangleright (\mathbf{S1_a})$ For individuals aged 24-34, pursue an undergraduate major in CS (exp utility protected: \textcolor{red}{10,292}, exp utility non-protected: \textcolor{blue}{22,586}).\\
        $\triangleright (\mathbf{S1_b})$ \common{For individuals with 6-8 years of coding experience, work with a computer 9 - 12 hours a day.} (exp utility protected: \textcolor{red}{17,161}, expe utility non-protected: \textcolor{blue}{19,254}).\\
$\triangleright (\mathbf{S1_c})$ \common{For males whose parents have a secondary school education, work as back-end developers} (exp utility protected: \textcolor{red}{51,542}, exp utility non-protected: \textcolor{blue}{46,354}).  
\end{ruleset}
\vspace{-1mm}

We show above the three example rules selected under group fairness constraint. The first rule $S1_a$ is more advantageous for the non-protected group, the second ($S1_b$) benefits both protected and non-protected groups similarly, while the third rule ($S1_c$) is more beneficial for the protected group. Overall, all these rules together satisfy the group fairness requirement. We also present three example rules selected under individual fairness constraints. In this case, all rules ($S2_a, S2_b, S2_c$) are nearly equally beneficial for both groups, but the overall expected utility is lower. Finally, consider the three example rules selected with no constraints. Here, all rules ($S3_a, S3_b, S3_c$ in the figure below) favor the non-protected group, highlighting the importance of including fairness constraints. 

\begin{ruleset}{\textbf{3 Selected Rules out of 20 for SO (SP individual fairness)}}
\small
    $\triangleright (\mathbf{S2_a})$ \common{For males aged 25-34 with no dependents, pursue a bachelor's degree} (exp utility protected: \textcolor{red}{16,158}, exp utility non-protected: \textcolor{blue}{18,134}).\\
        $\triangleright (\mathbf{S2_b})$ \common{For individuals aged 18 -24, work as back-end developers.} (exp utility protected: \textcolor{red}{12,664}, exp utility non-protected: \textcolor{blue}{14,101}).\\
$\triangleright (\mathbf{S2_c})$ \common{For individuals with dependents, pursue an undergraduate major in
CS} (exp utility protected: \textcolor{red}{16,124}, exp utility non-protected: \textcolor{blue}{17,138}).
\end{ruleset}

\begin{ruleset}{\textbf{3 Selected Rules out of 20 for SO (no fairness constraints)}}
\small
    $\triangleright (\mathbf{S3_a})$ \common{For White aged 25-34 with dependents, work with computer 9 - 12 hours a day and work as back-end developers} (exp utility protected: \textcolor{red}{11,147}, exp utility non-protected: \textcolor{blue}{32,248}).\\
        $\triangleright (\mathbf{S3_b})$ \common{For males aged 35-44 with dependents, work as back-end developers}. (exp utility protected: \textcolor{red}{11,189}, exp utility non-protected: \textcolor{blue}{40,207}).\\
$\triangleright (\mathbf{S3_c})$ \common{For students, pursue an undergraduate major in
CS} (exp utility for protected: \textcolor{red}{12,126}, exp utility for non-protected: \textcolor{blue}{22,174}).
\end{ruleset}


\paratitle{German}
While the expected utility for both protected and non-protected peaks in the no-constraints variant, the unfairness score is relatively high. This suggests that achieving BGL fairness necessitates compromising utility for both groups. Notably, to reduce the unfairness, it is feasible to impose either a rule coverage constraint or a rule coverage constraint combined with group fairness. 
We show three rules selected under BGL group fairness constraints below.
Since we are focusing on BGL fairness, which considers only the minimal gain for the protected group without regard for the gains of the non-protected group, we still observe a disparity between the two, even with a fairness constraint in place. 

\begin{ruleset}{\textbf{3 Selected Rules out of 20 for German (group BGL fairness)}}
\small
    $\triangleright (\mathbf{G1_a})$ For people aged 24-30 with 0-2 dependents, maintain a minimum balance of 200 DM in the checking account and pursue skilled employment
    (exp utility protected: \textcolor{red}{0.26}, exp utility non-protected: \textcolor{blue}{0.35}).\\
        $\triangleright (\mathbf{G1_b})$ For people seeking a loan to purchase furniture or equipment, maintain a minimum balance of 200 DM in the checking account (exp utility protected: \textcolor{red}{0.38}, exp utility non-protected: \textcolor{blue}{0.29}).\\
        $\triangleright (\mathbf{G1_c})$ For people seeking a loan for an unspecified purpose, maintain a minimum balance of 200 DM in the checking account and own a house. (exp utility protected: \textcolor{red}{0.54}, exp utility non-protected: \textcolor{blue}{0.41}).
\end{ruleset}


\begin{table*}[h!]
\centering
\small
\caption{Comparison of Solutions in Terms of Fairness}
\label{tab:fairness_variants}
\begin{tabular}{lccccccc}
\toprule
\textbf{Stack Overflow (SP fairness)} & \textbf{\# rules} & \textbf{coverage} & \textbf{coverage pro} & \textbf{exp utility} & \textbf{exp utility non-pro}&\textbf{exp utility pro} &\textbf{unfairness} \\
\midrule 
Group SP (2.5K)  &4& 97.82\%& 99.0\%& 20973.55& 20772.77& 18275.44& \textbf{2497.33} \\
Group SP (5K)  & 7& 97.31\%& 98.24\%& 22805.52& 23069.98& 18071.12& 4998.86\\
Group fairness (10K)  & 8& 97.52\%& 97.81\%& 27870.77& 27998.47& 17998.66& 9999.81 \\
Group SP (20K) & 20& \textbf{99.88\%}& \textbf{99.94\%}& \textbf{32671.11}& \textbf{32664.45}& \textbf{18423.64}& 14240.81 \\
\hline
Individual SP (2.5K) & 20& 99.95\%& 99.98\%& 24070.94& 24433.55& 12784.62& 11648.93 \\
Individual SP (5K)  & 20& \textbf{99.99\%}& \textbf{99.99\%}& 25526.1& 25911.22& 15327.21& \textbf{10584.01}\\
Individual SP(10K)   & 20& \textbf{99.99}\%& \textbf{99.99}\%& 28014.58& 28256.35& 14241.07& 14015.28 \\
Individual SP (20K) & 20& 99.51\%& 99.63\%& \textbf{29984.0}& \textbf{29966.29}& 14929.7& 15036.59\\
\midrule


\end{tabular}
\end{table*}
\vspace{-2mm}
\section{Experimental Evaluation}
\label{sec:experiments}
We present an experimental evaluation that evaluates \sysName\ effectiveness and efficiency. We aim to address the following questions:  
$\mathbf{Q1}$: How does the quality of our generated rulesets compare to that of existing methods? $\mathbf{Q2:}$ What is the efficiency of \sysName\ and how is it affected by various data and system parameters?

\subsection{Experimental Setup}
\label{sec:exp_setup}
\sysName\ was implemented in Python, and is publicly available in~\cite{fullversion}. 
CATE values computation was performed using the DoWhy library~\cite{dowhypaper}. The generated rules were translated into natural language using \reva{simple, manually constructed templates}.
We perform experiments on CloudLab ~\cite{Duplyakin+:ATC19} xl170 machines (10-core 2.4 GHz CPU, 64 GB RAM).
The datasets, protected groups, and default parameters considered are the same as those described in \cref{sec:casestudy}.



\vspace{1mm}
\paratitle{Baselines}
We compare \sysName\ with the following baselines:
\textbf{1. CauSumX}:
CauSumX \cite{DBLP:journals/pacmmod/YoungmannCGR24} is designed to find a summarized causal explanation for group-by-avg SQL query results. When applied directly to the datasets, it can be viewed as a solution to our problem with only an overall coverage constraint.
\textbf{2.IDS}~\cite{lakkaraju2016interpretable} is a framework for generating Interpretable Decision Sets for prediction tasks. IDS incorporates parameters restricting the percentage of uncovered tuples and the number of rules. These parameters were assigned the same values in our system.
\textbf{3. FRL}: The authors of \cite{chen2018optimization} proposed a framework for creating Falling Rule Lists (FRLs) as a probabilistic classification model. FRLs comprise if-then rules with antecedents in the if-clauses and probabilities of the desired outcome in the then-clauses, ordered based on associated probabilities.

\smallskip
Since IDS and FRL assume a binary outcome, we binned the salary variable in SO using the average value. To address fairness considerations, we run the baseline algorithms twice (excluding Brute-Force): Once on the entire dataset to obtain a set of rules applicable to the entire population, and again solely on the tuples belonging to the protected population to generate rules specifically tailored for them. \revb{We report the number of rules generated by the baselines, their coverage, and runtime. To compare the expected utility, we proceed as follows: The rules generated by IDS and FRL are prediction rules (e.g., IF owning a house = YES, THEN credit score = 1). As such, these rules do not provide an intervention to improve outcomes. We, therefore, treat the IF clauses
in two manners: (1) IF clauses as the selected grouping patterns and then apply step 2 (\cref{subsec:treatment_patterns}) of \sysName\ to determine the intervention patterns; (2) IF clauses as the selected intervention patterns, where the grouping pattern is the entire data. }





\subsection{Quality Evaluation (Q1)}
\label{exp:quality}
We compare the set of rules chosen by each baseline and \sysName. 

\begin{figure}[t]
    \centering
    \vspace{-3mm}\includegraphics[width=0.46\textwidth]{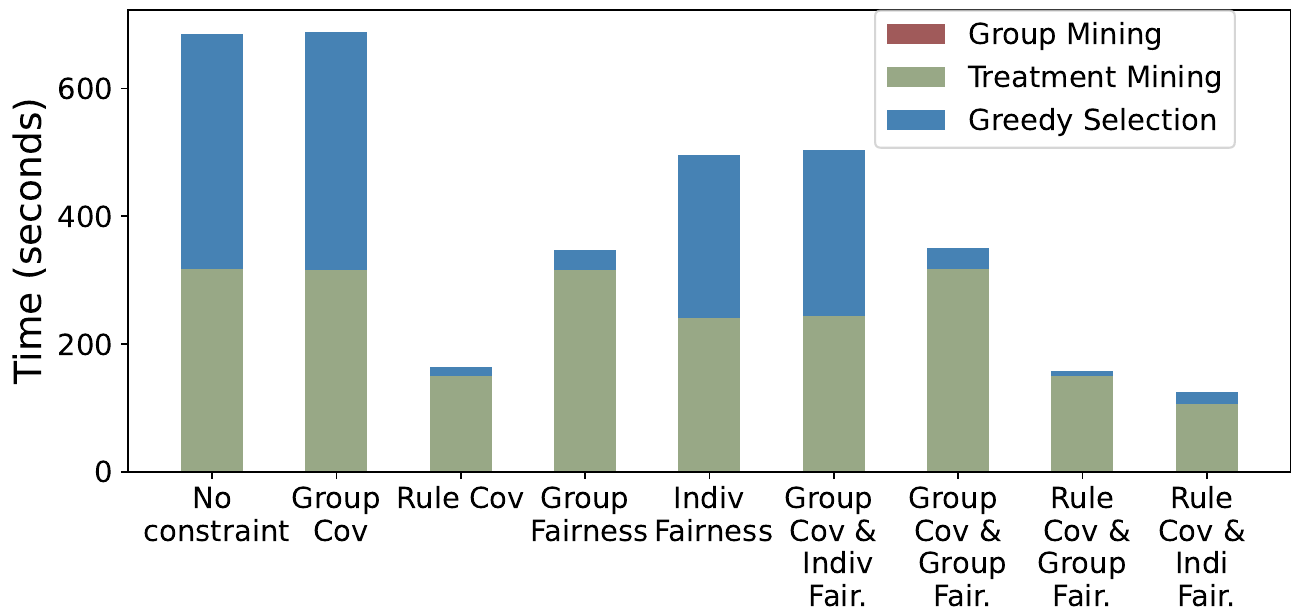}
    \caption{Runtime by-step of the \sysName\ algorithm (SO)}
    \label{fig:runtime_by_step}
\end{figure}

\paratitle{Stack Overflow} As discussed in \cref{sec:casestudy}, prescription rules selected without fairness constraints, similar to the behavior of CauSumX, were significantly more advantageous for non-protected.  
The rules generated by IDS do not suggest interventions to improve outcomes. For example, one rule states that if Country = Turkey and Age = 18-24 years, then the expected salary is low (with the outcome binned). Another key distinction is that these rules are not causal, as they are based on correlations in the data. For example, one rule indicates that if the years coding = 0-2 and Sexual Orientation = Gay or Lesbian, then the expected salary is low. Similarly, rules generated by FRL do not propose interventions to improve outcomes and are not causal. For example, one rule states that if Country = US and Sexual Orientation = Straight or Heterosexual, then the expected salary is high. 
In contrast, \sysName\ generates interventions aimed at improving the outcome by leveraging causal relationships. It also allows users to impose fairness constraints, ensuring that the protected group benefits from these interventions.




\paratitle{German}
Here again, with no fairness constraint (akin to CauSumX), the selected rules were mostly beneficial for the non-protected. 
Here again, the rules generated by IDS are not causal and do not offer an intervention. For example, one of the rules suggested that single females at the age of 35-41 are unlikely to get a loan.  
As before, the rules generated by FRL are also not causal and do not propose ways to improve the credit risk score. For example, one rule suggests that if a person has lived in a house for 4-7 years, their credit risk score is likely to be high. Another rule states that if the purpose of the loan is to buy a used car, the credit risk score is also likely to be high. Clearly, these rules rely on correlations in the data rather than causal relationships.
In contrast, \sysName\ generated a ruleset that offers interventions to improve the credit risk score based on causal relationships. Example selected rules are shown in \cref{sec:casestudy}.

\vspace{1mm}
\revb{We report the solution size, coverage, expected utility for protected and non-protected, and the unfairness of the rulesets generated using IDS and FRL (as explained in \cref{sec:exp_setup}).
The results are presented in 
{\bf \cref{tab:problem_variants}}. Notably, the expected utility for both protected and non-protected groups across both datasets is generally lower than that achieved by \sysName. \sysName\ consistently delivers higher expected utility for both groups and a smaller difference between these values. This indicates that our approach to mining grouping and intervention patterns is more effective than relying on these algorithms for the same purpose.  However, we note that the rules in IDS and FRL had different objectives (prediction accuracy) and had to be adapted for quantitative comparison using our measures.} 


\subsubsection{\reva{Robustness to the Causal DAG}}
\label{subsec:causal_DAG_robustness}
\reva{The quality of the generated rules may depend on the accuracy of the underlying causal DAG. To evaluate this, we examine the impact of different causal DAGs on the rules. The causal DAGs considered are as follows:
{
\textbf{(1) 1-layer Indep DAG:} A causal DAG where all attributes are independent of each other and only impact the outcome. This setting similar to the scenario where all the causal graph is ignored.
\textbf{(2) 2-layer Mutable DAG:} A simplified DAG where immutable attributes affect the mutable attributes, which impact the outcome variable. In this graph, all immutable attributes act as confounders but do not directly impact the outcome.
\textbf{(3) 2-layer DAG:} A simplified DAG where all variables affect the outcome but the mutable attributes are also confounded by all immutable attributes. }
\textbf{(4) PC DAG:} A causal DAG generated by the PC causal discovery algorithm~\cite{spirtes2001causation}}.

\reva{The results are depicted in \cref{tab:causal_dag_variants}. We report the expected utility as computed over the different causal DAGs. We observe that the expected utility remains similar for the Stack overflow dataset, demonstrating robustness towards the choice of causal dag. The results show some variability in German credit. However, the PC DAG and the original causal DAG are the most accurate (as they are based on the data distribution and domain knowledge) and achieve the highest coverage and expected utility.}

\begin{table*}[h!]
\centering
\small
\caption{\reva{Metrics Comparison with different Causal DAGs. 
}}
\label{tab:causal_dag_variants}
\begin{tabular}{p{40mm}ccccccc}
\toprule
\textbf{Stack Overflow (SP group fairness + group coverage)} & \textbf{\# rules} & \textbf{coverage} & \textbf{coverage pro} & \textbf{exp utility} & \textbf{exp utility non-pro}&\textbf{exp utility pro} &\textbf{unfairness} \\

\midrule 

Original causal DAG  & 11& 97.95\%& 98.85\%& 27934.76& 28144.58& 18145.23& 9999.35\\

\reva{1-Layer Indep DAG} &\reva{11}&\reva{98.38\%} & \reva{98.38\%}&\reva{28110.19}& \reva{28117}
&\reva{18117.45}
&\reva{9972} \\


\reva{2-Layer Mutable DAG} &\reva{10}	
&\reva{97.7\%}
 &\reva{98.4\%} &\reva{28198.59}&\reva{28193.09} &\reva{18193.23
}&\reva{9999.86} \\

\reva{2-Layer DAG} &\reva{10}	
&\reva{98.47\%}
 &\reva{98.87\%} &\reva{28106.4}&\reva{28211.17} &\reva{18211.4}&\reva{9999.77} \\

\reva{PC DAG} &\reva{10}&\reva{97.7\%} &\reva{98.4\%} &\reva{28198.59}&\reva{28193.09} &\reva{18193.23}&\reva{9999.86} \\

\midrule

\textbf{German Credit (BGL group fairness + group coverage)} & \textbf{\# rules} & \textbf{coverage} & \textbf{coverage pro} & \textbf{exp utility} & \textbf{exp utility non-pro}&\textbf{exp utility pro}&\textbf{unfairness}   \\
\midrule 

Original causal DAG  & 6& 100.0\%& 100.0\%& 0.36& 0.37& 0.31& 0.06 \\
\reva{1-Layer Indep DAG} &\reva{12}&\reva{100\%} &\reva{100\%} &\reva{0.31}& \reva{0.31}&\reva{0.29}&\reva{0.02} \\
\reva{2-Layer Mutable DAG}&\reva{13} &\reva{76.20\%}		
&\reva{79.35\%} & \reva{0.22}&\reva{0.22}&\reva{0.2} &\reva{0.02} \\

\reva{2-Layer DAG} &\reva{11}	
&\reva{71.20\%}
 &\reva{73.91\%} &\reva{0.26}&\reva{0.25} &\reva{0.23}&\reva{0.02} \\

\reva{PC DAG} &\reva{24}&\reva{100.00\%}	
 &\reva{100.00\%} &\reva{0.39}&\reva{0.39} &\reva{0.26}&\reva{0.13} \\
\bottomrule
\end{tabular}
\end{table*}

\subsection{Scalability Evaluation (Q2)}
\label{exp:scalability}


\begin{figure}[t]
    \begin{subfigure}[b]{0.46\textwidth}
        \centering
        \includegraphics[width=0.6\textwidth]{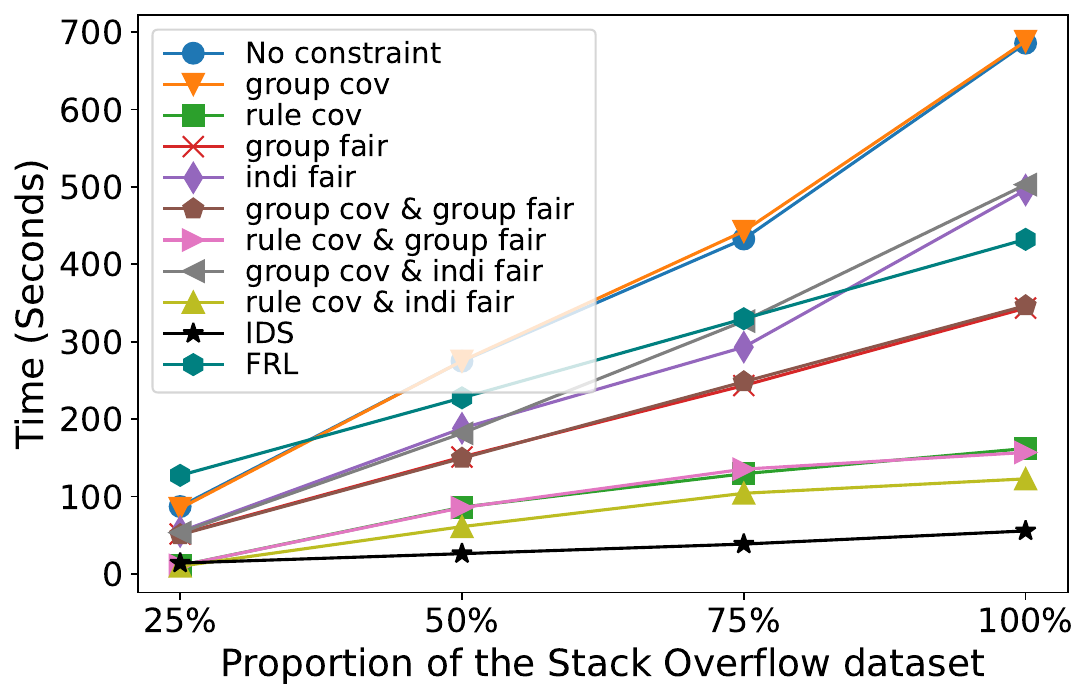}
    \end{subfigure}
    \caption{\revb{Runtime as a function of the dataset size (SO)}}
\label{fig:runtime_dataset_size}
\end{figure}

\begin{figure}[t]
    \vspace{-5mm}
    \centering
        \begin{subfigure}[b]{0.48\textwidth}
        \centering
        \includegraphics[width=\textwidth]{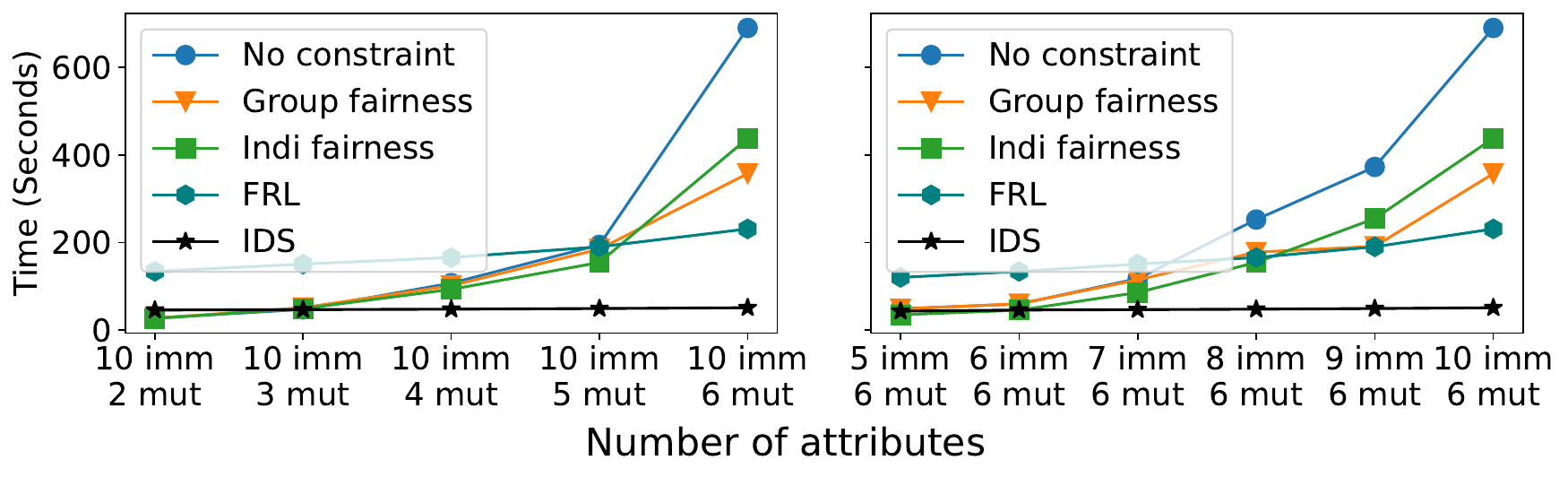}
        \end{subfigure}
    \caption{\revb{Runtime as a function of number of mutable and immutable attributes for SO with statistical parity}}
\label{fig:runtime_attributes}
\end{figure}


\paratitle{Breakdown analysis by step}
Figure~\ref{fig:runtime_by_step} shows the runtime comparison of \sysName for different problem settings. Observe that using rule coverage constraint has the lowest runtime because it helps to prune rules which do not satisfy the coverage constraint. Employing rule coverage with individual fairness is the fastest among all settings, while no constraint setting takes the longest time.
The time taken by the group mining phase is less than $2$ seconds across all setups, and is therefore not visible in the plot. The intervention mining phase (Step 2) is the most inefficient phase, which takes around $6$ mins for the unconstrained setting. The running time of these components aligns with our time complexity analysis (\cref{sec:algo}). Due to space restrictions, we do not present the corresponding plot for German dataset. All conclusions remain the same but the overall running time is $\approx 10\times$ faster due to its smaller size.

\revb{The running time of \sysName and the baselines is comparable. 
FRL is an order of magnitude slower than IDS because it uses a Bayesian modeling approach to simultaneously select a subset of rules and determine their optimal order, which involves solving a computationally intensive combinatorial problem. In contrast, IDS leverages submodular optimization on an unordered set of rules, significantly reducing the size of the search.}
We now analyze the impact of system parameters and data size on performance.

\smallskip
\paratitle{Data Size} \Cref{fig:runtime_dataset_size} compares the running time of \sysName\ \revb{and the baselines} for varying dataset sizes. We observe that the time taken by \sysName\  \revb{and the baselines} increases linearly for most of the settings, \revb{with \sysName\ demonstrating a runtime comparable to IDS under certain configurations}. We also observed that the quality of rules returned by sampling $25\%$ of the data points is comparable with the rules returned by using the whole dataset. Therefore, sampling-based optimizations can help to reduce the running time from $11$ min to less than $2$ min for the unconstrained setting and less than a minute with fairness constraints. 


\smallskip
\paratitle{Number of Attributes} Figure~\ref{fig:runtime_attributes} shows the runtime of \sysName\ while increasing the number of mutable and immutable attributes. 
On increasing the number of mutable attributes, the number of intervention patterns increases exponentially while on increasing immutable attributes, the number of grouping patterns increases exponentially. Therefore, both have a similar impact on runtime. \revb{IDS and FRL do not distinguish between mutable and immutable attributes and there the runtime increases slightly due to an increase in the number of attributes, as more rules are considered.}

\reva{In the following, we omit the results for the IDS and FRL baselines, as these parameters do not impact their runtime.}

\smallskip
\paratitle{Fairness Threshold} 
\Cref{tab:fairness_variants} presents the results for varying $\epsilon$ for group and individual fairness. We observe that the unfairness of the returned solution increases with the increase in $\epsilon$.  Additionally, the overall expected utility increases but the expected utility of the protected individuals decreases. This result matches our intuition as highly unfair rules are selected for higher values of $\epsilon$. We also notice that the greedy algorithm satisfies the group fairness constraint in all scenarios (unfairness is always less than the desired threshold).

For individual fairness, the overall utility increases monotonically with $\epsilon$. However, the rate of growth for individual fairness is slower than that of group fairness.
One interesting observation about individual fairness is that when all rules have statistical parity difference less than $2500$, the overall unfairness is still around $11K$. This sudden increase in unfairness when considering multiple fair rules together is because we evaluate the upper bound of unfairness by taking the difference between max utility of unprotected and min utility of protected individuals. On manual inspection, we observed that all rules are indeed individually fair.



\smallskip
\paratitle{Coverage Threshold} With the change in coverage thresholds, we do not observe major difference in the overall results because the majority of the rules exhibit very high coverage (\cref{tab:problem_variants}). 

\smallskip
\paratitle{Apriori Threshold} 
We observe that increasing the Apriori threshold $\tau$ leads to a reduction in the number of grouping patterns considered, and thus to a decrease in runtime. However, our findings indicate that higher $\tau$ values lead to a decrease in both utility and fairness. Based on our findings, we recommend using a default value of $0.1$, which provides satisfactory results in terms of coverage, utility, fairness and runtime.





\section{Limitations and Future Work}
\label{sec:conc}

\sysName\ generates actionable, causal-based recommendations to improve a target outcome while incorporating coverage and fairness constraints. \common{To the best of our knowledge, this is one of the first works in this direction, and several directions of future work remain. In this section, we discuss some of the current limitations of \sysName\ and future directions. }
\par
\reva{
{\bf Generation and usage of rules by \sysName.~} \sysName\ can be used to recommend actions for different subpopulations toward optimizing a target. As an example scenario, a policymaker may select the target outcome and the parameters for coverage and fairness constraints (which may be iteratively varied based on the application).
\sysName\ then generates a prescription ruleset as recommended actions for different subpopulations.
The current framework assumes that the policymaker is trustworthy, will not misuse the rules, and will publish the relevant recommendations for each subpopulation. However, it is important to note that if not all rules are provided to all subpopulations, disparities among subpopulations may increase.
In addition, the generated rules 
may not impact all individuals receiving the recommendation in the same way. The gain in objective may vary across different subpopulations. For example, an increase in $\$10k$ revenue may have varied impacts in different countries, depending on the cost of living and purchasing power. Addressing these will be interesting future work.
}

\reva{{\bf Considering constraints,  costs, and resources in rule generation.~}} 
The current framework does not account for the cost of interventions. Some interventions may be impractical (e.g., pursuing a bachelor’s degree in CS for someone who already holds a PhD in CS) or vary significantly in cost (e.g., moving to the US versus learning Python). 
\reva{Further, the generated rules do not consider global constraints, e.g., if the targeted outcome is the salary in an institution, there may be a budget. 
Future research will incorporate intervention costs to generate budget-constrained rules and address finite resource allocation scenarios to account for cases where the population size that can achieve improved outcomes is limited}.

{\bf Extension to multi-table data.~} \sysName\ currently supports a single-relation database without dependencies among tuples to ensure compliance with the SUTVA assumption~\cite{rubin2005causal} (discussed in \cref{sec:background-causal}). However, this assumption breaks down even in single-table databases with tuple dependencies. In single-table settings, intervention and grouping patterns are straightforward to define. Extending these definitions to multi-table databases, where grouping attributes and interventions may originate from different tables, introduces a significant challenge. This complexity arises due to many-to-many relationships and cross-table patterns. 
Previous work, such as \cite{salimi2020causal,galhotra2022hyper}, has extended causal models to handle multi-table data, but they have not explicitly targeted recommendations for subgroups. Expanding our framework to support multi-relational databases with complex dependencies remains an important direction for future research. Notably, prior work leveraging causal inference \cite{ma2023xinsight, youngmann2022explaining, salimi2018bias, DBLP:journals/pacmmod/YoungmannCGR24} has also primarily focused on single-table settings.


{\bf Robustness of rules.~} The generated rules may be influenced by several factors, including the method used to evaluate causal effects, the thresholds set for the constraints, the overall quality of the data, and the quality of the causal DAG.
\reva{In this work, we assume that the causal DAG is provided as part of the input, with the responsibility for validating its correctness resting on the policymaker. Nonetheless, the causal DAG only needs to specify causal dependencies between variables without detailing the nature of those dependencies. Developing methods that are robust to inaccuracies in the DAG is an important direction for future work. }

\common{{\bf Explainability and prescriptive causal nature of rules.~} While if-then rules for prediction or causation are considered explainable or interpretable in the literature \cite{lakkaraju2016interpretable,pradhan2022interpretable,van2021evaluating,guidotti2018local,chen2018optimization}, we note that no additional explanations or justifications come with the rules mined by \sysName. 
Generating meaningful explanations to describe how the rules impact the outcome and the variability of the outcome within various sub-populations is deferred to future work. }

\smallskip
\common{To conclude, observational causal analysis is the main foundation for any {\em prescription} or {\em recommendation} beyond predictions, when a randomized controlled trial is not possible due to cost, ethics, or feasibility issues. However, the analysis depends on assumptions (ignorability, causal DAG) that may not hold in a scenario and one should know the assumptions and limitations of these claims. How the rules should be used in practice considering practical and fairness aspects is a general direction of future work.}

\cut{

B. The *causal* aspect of this work is based on the vast causal inference literature on observational causal studies in Statistics and AI on observed or collected data. Under certain assumptions (that are known to be untestable), causal claims can be made from collected or observed data, and Average Treatment Effect (ATE) and Conditional Average Treatment Effect (CATE) on a subpopulation can be estimated. We follow Judea Pearl’s Graphical Causal Model from the AI literature [6] and use the DoWhy package released by Microsoft (https://github.com/py-why/dowhy) to estimate ATE and CATE. Indeed, the estimation of ATE and CATE depends on the quality of the causal DAG as mentioned in Point 4 of the revision plan, which we assume is given as background information. The causal DAG only needs causal dependencies between variables without specifying its nature. In the revision, we will vary the causal DAGs to evaluate the dependency of our framework on the accuracy of the causal DAG.

C. Causal analysis is the main foundation for any *prescription* or *recommendation* beyond predictions. When possible, one would do a randomized controlled trial (e.g., when a new vaccine is tested), but often they are not possible due to cost/ethics/feasibility issues, and one depends on observational causal study (used in sociology, econometrics, psychology). Indeed, observational causal study depends on assumptions (ignorability, causal DAG) that may not hold in a scenario, but that still a causal claim. This was the reason we used the terms prescription and recommendation. However, we agree that one should know the assumptions and limitations of these claims, and we will make sure to clarify this in the revised paper and explain the rationale behind using the terms causal and prescription along with their limitations.
}

\begin{acks}
 This work was partially supported by the NSF awards IIS-2008107 and IIS-2147061, and a grant from Infosys. Additional funding was provided by the Henry and Marilyn Taub faculty for computer science at the Technion.
\end{acks}

\small
\bibliographystyle{ACM-Reference-Format}
\balance
\bibliography{vldb_sample}

\section{Missing Proofs}

\begin{proof}[Proof of Lemma \ref{lemma:individual_rules}]
The utility of a rule $r$ denotes the expected increase in outcome $O$ when all individuals within the subgroup $\pattern_g$ are treated with $\pattern_t$. 
\begin{eqnarray*}
    utility(r) &=& \frac{1}{|\pattern_g|} \sum_{i\in coverage(\pattern_g)} utility_i(\pattern_t)
\end{eqnarray*}
where $utility_i(\pattern_t)$ denotes the utility for tuple $i$ with respect to treatment $\pattern_t$. Since utility(r) is an average over multiple different utilities, the utility will be higher than the expected value for certain tuples in $coverage(\pattern_g)$.
Let $i^* = \arg \max utility_i(\pattern_t)$.

Consider a new prescription rule $r' (i^*,\pattern_t)$ which considers the same treatment $\pattern_t$ for the tuple $i^*$. Therefore,
    $utility(r') = utility_i(\pattern_t) > utility(r)$.
\end{proof}

\subsection{Hardness Results}
We next study the complexity of the \probName\ problem under different constraint combinations. 
We show that \probName\ is equivalent to optimizing a non-negative and monotone submodular function. 
Furthermore, the individual fairness constraint and rule coverage constraints are matroid constraints. Therefore, a greedy algorithm is appropriate approach to solve the problem.

\begin{proposition}
\label{prop:unconstrained}
    The optimization objective of \probName\ problem is a non-negative submodular function.
\end{proposition}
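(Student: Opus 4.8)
The plan is to show that the objective $f(R) = \lambda_1\,(l - size(R)) + \lambda_2\cdot\text{ExpUtility}(R)$ from \Cref{def:problem} is both non-negative and submodular. The key structural fact I would exploit is that a non-negative linear combination of submodular functions is submodular, so it suffices to analyze the two summands separately and then recombine them with the non-negative weights $\lambda_1,\lambda_2$.

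For non-negativity, I would argue directly: $\lambda_1,\lambda_2 \geq 0$ by assumption; $size(R) = |R| \leq l$ since $R \subseteq \{r_i\}_{i=1}^l$, giving $l - size(R) \geq 0$; and $\text{ExpUtility}(R) \geq 0$ because every retained rule has non-negative utility (negative-utility rules are discarded, per \Cref{subsec:utility}) while uncovered tuples contribute zero. Thus $f$ is a non-negative combination of non-negative quantities. For submodularity, I would first observe that the size term $\lambda_1(l - |R|) = \lambda_1 l - \lambda_1|R|$ is modular in $R$ (affine in cardinality), hence trivially submodular, and that adding it preserves submodularity. The work therefore reduces to the utility term, which I would rewrite as $\text{ExpUtility}(R) = \frac{1}{n}\sum_{t \in \db} h_t(R)$, where $h_t(R) = \max_{r \in R_t}\text{utility}(r)$ is the best utility among rules in $R$ covering $t$, with the max over the empty set taken to be $0$ (matching the convention that uncovered tuples contribute zero and $n = |\db|$). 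Since $\text{ExpUtility}$ is a non-negative average of the functions $h_t$, it is enough to prove each $h_t$ is submodular.

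The only non-routine step, and thus the main obstacle, is establishing submodularity of the per-tuple best-utility function $h_t$. This is an instance of the classical fact that the maximum of non-negative weights is submodular, and I would verify the diminishing-returns inequality by hand. Fix $S \subseteq T$ and a rule $e \notin T$. If $e$ does not cover $t$, both marginal gains are zero and the inequality is immediate. If $e$ covers $t$, then $h_t(S \cup \{e\}) - h_t(S) = \max\bigl(0,\ \text{utility}(e) - \max_{r \in S_t}\text{utility}(r)\bigr)$ and likewise for $T$. Because $S \subseteq T$ forces $\max_{r \in S_t}\text{utility}(r) \leq \max_{r \in T_t}\text{utility}(r)$, the gain over $S$ dominates the gain over $T$, which is precisely $h_t(S \cup \{e\}) - h_t(S) \geq h_t(T \cup \{e\}) - h_t(T)$.

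Finally, I would assemble the pieces: each $h_t$ is submodular, so $\text{ExpUtility}$ is submodular as a non-negative combination of the $h_t$; adding the modular size term and scaling by the non-negative weights $\lambda_1,\lambda_2$ keeps the result submodular; and the term-by-term non-negativity argument shows $f \geq 0$. This establishes that the objective of the \probName\ problem is a non-negative submodular function. I would note that monotonicity is \emph{not} claimed here (and indeed fails, since the size term decreases as rules are added), which is consistent with the proposition asserting only non-negativity and submodularity.
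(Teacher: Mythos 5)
Your proof is correct and follows essentially the same route as the paper's: decompose the objective into the size term and the expected-utility term, argue each is non-negative and submodular, and conclude by closure under non-negative linear combinations. The only difference is that you verify submodularity of the per-tuple best-utility maximum from first principles, whereas the paper's proof simply cites \cite{lakkaraju2016interpretable} for the size term and asserts the analogous fact for the expected utility.
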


\begin{proof}[Proof of \cref{prop:unconstrained}]
 According to \cite{lakkaraju2016interpretable}, the size objective is a non-negative and submodular function. Similarly, the expected utility—assuming each individual receives a single rule and selects the best option—is also a non-negative submodular function. As a result, the linear combination of these functions remains a non-negative submodular function, and maximizing it is known to be NP-hard \cite{khuller1999budgeted}.
\end{proof}

\begin{proposition}
\label{prop:matroid}
    Individual fairness and rule coverage constraints are matroid constraint
\end{proposition}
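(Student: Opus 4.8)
The plan is to exhibit, for each of these two constraints, a ground set together with a family of ``independent'' rulesets and to verify the three matroid axioms. The ground set in both cases is the full collection of candidate rules $E = \{r_i\}_{i=1}^{l}$. The key structural observation I would lean on is that both constraints are \emph{local}, i.e. decomposable into per-rule tests: whether a ruleset $R$ is feasible is determined independently rule by rule, with no coupling across the rules in $R$. This is exactly what distinguishes them from the group fairness and group coverage constraints, which aggregate over $R$ (e.g. $Coverage(R) = \bigcup_{r \in R} Coverage(r)$, or $\text{ExpUtility}_p(R)$) and therefore do not admit such a decomposition.

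Concretely, for individual fairness (either the SP form $|utility_p(r) - utility_{\bar p}(r)| \le \epsilon$ or the BGL form $utility_p(r) \ge \tau$) the condition $R \models \fairnessconstraints$ holds iff every $r \in R$ passes the corresponding per-rule test; likewise for rule coverage, $R$ is feasible iff every $r \in R$ satisfies $coverage(r) \ge \theta \cdot |\db|$ and $coverage_p(r) \ge \theta_p \cdot |\pattern_p(\db)|$. In each case I would define the admissible sub-collection $E' \subseteq E$ to be precisely the set of candidate rules that pass the relevant fixed test. Since these tests refer only to properties of an individual rule $r$ and to the fixed input thresholds, $E'$ is a fixed subset determined by the instance, and the family of feasible rulesets is exactly $\mathcal{I} = \{R \subseteq E : R \subseteq E'\} = 2^{E'}$.

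It then remains to check that $(E, \mathcal{I})$ is a matroid, which is immediate for a family of this form (a free matroid on $E'$ in which every rule outside $E'$ is a loop). First, $\emptyset \subseteq E'$, so $\emptyset \in \mathcal{I}$. Second, $\mathcal{I}$ is downward closed: if $R \subseteq E'$ and $S \subseteq R$, then $S \subseteq E'$, so $S \in \mathcal{I}$. Third, the exchange axiom holds: given $A, B \in \mathcal{I}$ with $|A| < |B|$, any element $e \in B \setminus A$ satisfies $e \in E'$, hence $A \cup \{e\} \subseteq E'$ and $A \cup \{e\} \in \mathcal{I}$. This would establish that each of the two constraints is a matroid constraint, justifying the greedy procedure when optimizing the monotone submodular objective of \cref{prop:unconstrained} subject to either of them.

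I do not anticipate a genuine technical obstacle; the only point requiring care is the \emph{decomposability} claim, namely verifying rigorously that neither constraint hides an inter-rule dependency. I would make this explicit by contrasting with the aggregate constraints: because the intervention pattern of a rule fixes its treated/control split (see \cref{def:utility}), the quantities $utility_p(r)$, $utility_{\bar p}(r)$, $coverage(r)$, and $coverage_p(r)$ are intrinsic to $r$ and unaffected by which other rules are chosen, which is precisely what licenses the reduction to a fixed admissible set $E'$ and hence the matroid structure.
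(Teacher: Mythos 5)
Your proposal is correct and follows essentially the same route as the paper: both arguments rest on the observation that individual fairness and rule coverage are per-rule (decomposable) conditions, so the feasible rulesets form the power set of a fixed admissible subset of rules, from which the hereditary and exchange axioms follow immediately. Your framing as a free matroid on the admissible set $E'$ (with the remaining rules as loops) is a slightly cleaner packaging of the same verification the paper carries out.
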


\begin{proof}[Proof of Proposition \ref{prop:matroid}]

We will show these constraints satisfy the following properties: 

\begin{enumerate}
    \item \textbf{Hereditary Property}: If  $S$ is an independent set, then every subset of $S$ is also an independent set.
    \item \textbf{Exchange Property}: If $S$ and $T$ are independent sets and $|S| < |T|$, then there exists an element $e \in T \setminus S$ such that $S \cup \{e\}$  is also an independent set.
\end{enumerate}
These two properties ensure that the set system behaves like a matroid.

In our setting Start by specifying the ground set is all possible rules and what qualifies as an "independent set" is a subset of rules satisfying a constraint. 

\paragraph*{\bf Individual Fairness} 
If a set of rules $R$ satisfies the individual fairness constraint, this means each rule within $R$ individually satisfies the constraint. Consequently, any subset $R' \subseteq R$ also upholds individual fairness. This further implies the exchange property, as any rule that satisfies individual fairness can be added to an individually fair set of rules while preserving individual fairness.

\paragraph*{\bf Rule coverage} 
If a set of rules $R$ satisfies the rule coverage constraint, this means each rule within $R$ individually satisfies the rule coverage constraint. Consequently, any subset $R' \subseteq R$ also satisfies the rule coverage constraint. This also implies the exchange property, as any rule that satisfies rule coverage can be added to a set of rules satisfying rule coverage while preserving the rule coverage constraints.
\end{proof}

For the group coverage, we can show that merely finding a solution that satisfies the constraints, even without maximizing expected utility, is NP-hard via a reduction from the Set Cover problem~\cite{feige1998threshold}. 

\begin{proposition}
    \label{prop:group_coverage}
    \probName\ with a group-coverage constraint is NP-hard
\end{proposition}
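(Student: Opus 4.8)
The plan is to prove NP-hardness by a polynomial-time reduction from the classical \textsc{Set Cover} problem~\cite{feige1998threshold}, whose decision version asks, given a universe $U = \{e_1, \ldots, e_n\}$, a family of subsets $S_1, \ldots, S_m \subseteq U$, and an integer $k$, whether some $k$ of the subsets cover all of $U$. The key idea is to make each candidate rule play the role of one subset, so that the group-coverage constraint forces the selected rules to cover the whole database, while the size term in the objective (Eq.~\eqref{eq:optimization}) forces the selected cover to be small. In this way solving \probName\ amounts to finding a minimum-size cover.

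\textbf{Construction.} I would build a database $\db$ with exactly one tuple $t_j$ per element $e_j \in U$, so $|\db| = n$. To encode set membership using only conjunctions of predicates over immutable attributes (as required by the grouping-pattern restriction in \Cref{def:pattern}), I introduce one immutable indicator attribute $A_i$ for each subset $S_i$, setting $t_j[A_i] = 1$ if $e_j \in S_i$ and $t_j[A_i] = 0$ otherwise. The candidate rule $r_i$ then uses the single-predicate grouping pattern $\patterngroup^i = (A_i = 1)$, so that $\coverage(r_i) = \coverage(\patterngroup^i) = S_i$ by construction. Since utility will be made irrelevant (see below), the intervention patterns $\patterninterv^i$ can be fixed to any well-formed mutable-attribute intervention; I would add a single mutable attribute and populate it so that each $r_i$ has nonempty treatment and control groups, ensuring the rules are legal inputs to the problem.

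\textbf{Setting the parameters.} I would make the protected-group condition vacuous by taking $\theta_p = 0$, and set the overall coverage threshold to $\theta = 1$, so that a ruleset $R$ satisfies the group-coverage constraint exactly when $\coverage(R) = \db$, i.e., when the corresponding subfamily $\{S_i : r_i \in R\}$ covers $U$. I would also impose no fairness constraint and zero out the utility weight ($\lambda_2 = 0$) while keeping $\lambda_1 = 1$, so the objective reduces to $l - size(R)$; maximizing it is the same as minimizing $size(R)$ over valid rulesets. With these choices a valid ruleset of size at most $k$ exists iff the \textsc{Set Cover} instance admits a cover of size at most $k$, and the reduction is clearly polynomial. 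This shows that even deciding feasibility under a size bound, and hence the optimization problem itself, is NP-hard, and notably the hardness does not rely on the utility objective at all.

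\textbf{The main obstacle.} The only delicate point is realizing arbitrary subset coverage with grouping patterns, which are restricted to conjunctions of predicates over immutable attributes rather than arbitrary tuple sets. The indicator-attribute encoding above resolves this cleanly: each subset becomes a single equality predicate, so no conjunction of more than one predicate is ever needed and the coverage of every candidate rule equals exactly the intended subset. A secondary check is that the constructed candidate rules remain well-formed (nonempty treatment and control, positivity), which the added mutable attribute handles; because $\lambda_2 = 0$, their actual utility values never enter the argument.
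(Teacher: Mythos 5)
Your proposal is correct, and it follows the same overall strategy as the paper: a polynomial reduction from \textsc{Set Cover} in which subsets become grouping patterns via indicator attributes and the group-coverage constraint forces the chosen rules to cover the universe. The constructions differ in how they control the rest of the problem, though. You exploit the fact that Definition~\ref{def:problem} takes the candidate ruleset $\{r_i\}_{i=1}^l$ as part of the input, so you can hand the reduction exactly the $m$ single-predicate rules $(A_i=1)$, set $\theta=1$ and $\theta_p=0$, and let the size term $l-size(R)$ do the counting; this makes the equivalence ``valid ruleset of size $\le k$ iff cover of size $\le k$'' essentially immediate. The paper instead treats the grouping pattern as unrestricted (``$\pattern_g$ can be any predicate''), which forces it to add $m'$ dummy tuples with a sentinel value and to use the fractional threshold $\theta=\frac{n'+k'}{n'+m'}$ both to rule out patterns containing $A_i=0$ and to bound the number of selected rules through the coverage budget; that argument is more delicate but establishes hardness even when the adversary may synthesize arbitrary patterns rather than choose from a supplied list. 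One minor point: you neutralize utility by setting $\lambda_2=0$, which the definition permits since the weights are only required to be non-negative; if you wanted the hardness to survive for every fixed $\lambda_2>0$, you would borrow the paper's device of making the outcome identically zero so that every candidate rule has utility $0$ — a one-line addition to your construction.
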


\begin{proof}[Proof of \ref{prop:group_coverage}]
In the decision version of the Set Cover problem, we are given a universe of elements $U = \{x_1, \ldots, x_{n'}\}$, a collection of $m$ subsets $S_1, \ldots, S_{m'} \subseteq U$ and a number $k$. The question is whether there exists a cover of $U$ of at most $k'$ subsets.

In the decision version of \probName\, 
we are searching for a set of rules $R$ such that: $f(R) \geq \tau$, where:
$$f(R) = \lambda_1 {\cdot} (l - size(R)) {+} \lambda_2 {\cdot} ExpUtility(R)$$
 such that $R$ satisfies the group-coverage constraints, defined by the parameter $\theta$. In this proof, we assume no protected group is given, namely the constraint requires that the selected ruleset $R$ would cover at least a $\theta$ fraction of the population (i.e., the protected group to be the empty group).

Given an instance of the set cover problem, we build an instance of the \probName\ problem as follows. 
We build a relation $R$ with $m'+1$ attributes, $\attrset = (A_1, \ldots, A_{m'}, O)$, and containing $n'+m'$ tuples. For each element $x_i \in U$, we create a tuple $t_i$, such that $t_i[A_j]=1$ iff $x_i \in S_j$. 
We further add $m'$ tuples $t_{S_j}$ such that $t_{S_j}[A_j] = 1$, $t_{S_j}[O] = 0$, and $t_{S_j}[A_p] = l \neq 0$ for all $p \neq j$ where $l$ is a unique number not used anywhere else in an attribute of $R$. We set the outcome variable to be $O$.

Here, $\pattern_g$ can be any predicate.  
Note that each set of tuples defined by a pattern can only have an outcome of $0$, as the outcome of all tuples is $0$. 
Therefore, the utility of all intervention patterns is $0$. 
For \probName, we further define the threshold for the group coverage constraint $\theta = \frac{n'+k'}{n'+m'}$.
The underlying causal DAG, $G$, only contains the edges of the form $A_j \to O$ for all $1\leq j \leq m'$. 
We claim that there exists a cover of $U$ with at most $k$ sets iff there exists a solution $R$ to \probName\ such that $f(R) \geq (l-k)$. 

($\Rightarrow$) Assume we have a collection $S_{j_1}, \ldots, S_{j_k}$ such that $\cup_{j = j_1}^{j_k} S_j = U$. We show that there is a solution for \probName\ as follows. For each $S_{j_1}$, we choose for the solution the pattern $\pattern_g^{j_i}: A_{j_i} = 1$. 
We show that $R = \{(\pattern_g^{j_1}, \emptyset), \ldots, (\pattern_g^{j_k}, \emptyset)\}$ is a solution to \probName.
The intervention pattern can be any pattern, as the utility of every intervention pattern is $0$.
First, we note that all tuples of the form $t_i$ are covered by at least one grouping pattern by their definition. For the $m$ remaining tuples, we have coverage of at most $k$ tuples. These are the tuples $t_{S_{j_i}}$ that have $A_{j_i} = 1$. 
Thus, the number of covered tuples is exactly $n'+k'$ out of $n'+m'$ tuples in $R$. 
If there are fewer than $k$ tuples we can augment the original cover with arbitrary sets to obtain a cover of size $k$.

($\Leftarrow$) Assume we have a solution $R$ to \probName\ with the aforementioned parameters. We show that we can find a solution to the set cover problem. 
First, note that since $f(R) \geq (l-k)$ and the expected utility is always $0$, that means we have selected no more than $k$ rules.

Suppose $R = \{(\pattern_g^{j_1}, \emptyset), \ldots, (\pattern_g^{j_k}, \emptyset)\}$. 
We first claim that no grouping pattern that includes $A_i = 0$ in a conjunction can be included in $R$ as such a pattern will not cover any tuple $t_{S_j}$ since these tuples do not have an attribute with value $0$ by definition (and any other number other than $1$ will only cover a single tuple). Thus, the number of covered tuples will be $< \frac{n'+k'}{n'+m'} = \theta$, which would contradict the assumption that this is a valid solution to \probName\ that satisfies the coverage constraint. 
Hence, all patterns are of conjunctions of $A_i = 1$. For each intervention pattern of the form $\pattern_g = \wedge_{j=i_a}^{i_b} (A_j = 1) \land (A_p = l)$, we choose an arbitrary attribute in the conjunction $A_{j}$ if $(A_j = 1) \in \pattern_g$ and choose $S_{j}$ for the cover. Finally, if there is an uncovered element $x$ in $U$ and $R$ includes a pattern in of the form $\pattern_g = (A_j = l)$ where $l \neq 1$, we choose for the cover a set $S$ that covers $x$ arbitrarily.  
We claim that the chosen collection of sets is a cover of $U$. 
To see this, 
recall that we claimed that the coverage of $R$ is at least $n+k$. 
If the coverage includes tuples of the form $t_{S_j}$, then each pattern covers a single tuple. Suppose these patterns are $\pattern_a, \ldots, \pattern_b$. 
When building the coverage, instead of these patterns, we add a set that covers elements not yet covered by existing patterns. Thus, there are at least $b-a$ covered elements from $U$ in addition to the $n+k-(b-a)$ tuples covered by the patterns. Thus, the set cover we have assembled contains $n-(b-a) + (b-a) = n$ elements and covers all elements in $U$. 
\end{proof}

\end{document}